\newtheorem{lemma}{\textbf{Lemma}}
\newtheorem{cor}{\textbf{Corollary}}
\newtheorem{definition}{\textbf{Definition}}
\newtheorem{thm}{\textbf{Theorem}}
\newtheorem{proposition}{\textbf{Proposition}}
\newtheorem{remark}{\textbf{Remark}}
\newcommand{\ie}{{i.e.}}
\newcommand{\eg}{{e.g.}}
\newcommand{\mc}[1]{\mathcal{#1}}
\newcommand{\mbb}[1]{\mathbb{#1}}
\renewcommand{\Pr}[1]{\ensuremath{\operatorname{\mathbf{Pr}}\left[#1\right]}}
\newcommand{\Ex}[1]{\ensuremath{\operatorname{\mathbf{E}}\left[#1\right]}}
\newcommand{\pt}{\tilde{p}}
\DeclareMathOperator{\dist}{dist}
\title{Coded Load Balancing in Cache Networks}
\author{Mahdi~Jafari~Siavoshani, Farzad~Parvaresh, Ali~Pourmiri, Seyed~Pooya~Shariatpanahi%
\thanks{The authors' names appear in alphabetical order.}%
\thanks{
	M. Jafari Siavoshani is with the Department of Computer Engineering, Sharif University of Technology, Tehran, Iran (email: mjafari@sharif.edu).}%
\thanks{
	F. Parvaresh is with the Department of Electrical Engineering, University of Isfahan, Isfahan, Iran (email: f.parvaresh@eng.ui.ac.ir) and also School of Mathematics,
	Institute for Research in Fundamental Sciences (IPM), P.O. Box: 1395--5746, Tehran,
	Iran. F. Parvaresh was supported by a grant from IPM (No. 95680425).}%
\thanks{
	A. Pourmiri is with the Department of Software Engineering, University
		of Isfahan, Isfahan, Iran, and Department of Computing, Macquarie University, Sydney, Australia (ali.pourmiri@mq.edu.au).}%
\thanks{
	S. P. Shariatpanahi is with the School of Computer Science, Institute for Research in Fundamental Sciences (IPM), Tehran, Iran (email: pooya@ipm.ir).}
}
\begin{document}

\maketitle

%--------------------------------------------------------------------
% Abstract ----------------------------------------------------------
\begin{abstract}
We consider load balancing problem in a cache network consisting of storage-enabled servers forming a distributed content delivery scenario. Previously proposed load balancing solutions cannot perfectly balance out requests among servers, which is a critical issue in practical networks. Therefore, in this paper, we investigate a coded cache content placement where coded chunks of original files are stored in servers based on the files popularity distribution. In our scheme, upon each request arrival at the delivery phase, by dispatching enough coded chunks to the request origin from the nearest servers, the requested file can be decoded.

Here, we show that if $n$ requests arrive randomly at $n$ servers, the proposed scheme results in the maximum load of $O(1)$ in the network. This result is shown to be valid under various assumptions for the underlying network topology. Our results should be compared to the maximum load of two baseline schemes, namely, \emph{nearest replica} and \emph{power of two choices} strategies, which are $\Theta(\log n)$ and $\Theta(\log \log n)$, respectively. This finding shows that using coding, results in a considerable load balancing performance improvement, without compromising communications cost performance. This is confirmed by performing extensive simulation results, in non-asymptotic regimes as well.
\end{abstract}

%\newpage

\centerline{\textbf{Keywords}}
Distributed Caching Servers, Content Delivery Networks, Coded Caching, Request Routing, Load Balancing, Communication Cost.

%--------------------------------------------------------------------
%Section-------------------------------------------------------------
\section{Introduction}

Here, in the first subsection we present the motivation and the main problem we consider in this paper. Then, in the second subsection, we review related works which consider using coding for delivering contents to the users, and discuss how our work differs from them. Finally, we present the paper structure.

\subsection{Motivation and Paper Contributions}
The main objective of a Content Delivery Network (CDN) is fulfilling end-users' content requests by forwarding these requests to distributed caching servers. Such forwarding procedure (aka. request routing) from the request origin to CDN servers is done by a \emph{mapping scheme} that decides which request should be answered by which server, considering the current network state \cite{SuCKB09, ChenST15}. However, due to the random nature of these requests, addressing the load balancing issue among the servers is of critical importance \cite{ManfrediOR13, Cardellini_2002_Survay}, i.e., overloading a single server with many requests should be avoided. Moreover, to address scalability issues, the main research in this field has been focused on designing effective \emph{distributed} load balancing schemes \cite{Dilly_2002_Distributed_LB, Nygren_Akamai_2010}.

Every request-to-server mapping scheme should manage two main metrics when assigning requests to caching servers. The first metric is \emph{communication cost} which is a measure of distance between the assigned server and the request origin, while the second metric is the \emph{maximum load} incurred to servers, which is the number of requests assigned to the most loaded server. Ideally, one aims to design a mapping scheme that results in the minimum communication cost, while evenly distributing the requests among servers. However, due to the random nature of request arrivals this is not always possible. Interestingly, there is an intrinsic trade-off between these two metrics, and managing this fundamental trade-off is a core issue in load balancing schemes \cite{Pathan_OTM_2008, TangTW14, PourJafShar_BalAllocCDN_IPDPS,SiaPourShar_TPDS17}. While assigning each request to the nearest eligible server\footnote{Here, by eligible servers we mean those servers that have cached the requested file.} (\ie, the {nearest replica strategy}), results in the minimum communication cost, it may incur high loads to servers in proximity of request flash crowds. This observation has resulted in proposing a \emph{proximity-aware power of two choices} strategy in  \cite{PourJafShar_BalAllocCDN_IPDPS} and \cite{SiaPourShar_TPDS17}, which queries the load of two nearby servers and assigns the request to the server with lesser load. As proved in \cite{SiaPourShar_TPDS17}, this will reduce the maximum load of $\Theta(\log n)$ in the nearest replica strategy to $\Theta(\log \log n)$, at a certain increase in communication cost.

In contrast to all previous load balancing schemes in CDNs, in this paper we follow a fundamentally different approach rather than allocating each request to a \emph{single} responding server. We propose to cache coded chunks of each file in distributed caching servers, which either consist of random linear combinations of the original file chunks, or are constructed via more sophisticated Fountain-like codes \cite{Luby_2002_LT_Codes,Byers_1998_Fountain_Code, Shokrollahi06_Raptor_Code}. In the delivery phase, each request is distributed among a number of nearby servers which have cached the corresponding coded chunks. Then, the requester can decode the whole file if it receives enough coded chunks from such servers.

In particular, we model the cache network topology by an underlying graph of $n$ nodes which represent cache-enabled servers, where the edges are communication links. At the cache content placement phase, each file is divided into $\ell$ equal-sized chunks, which are then linearly combined to form coded chunks corresponding to this particular file. Then, each server caches coded chunks corresponding to different files, based on the file popularity profile. At the delivery phase, $n$ file requests arrive uniformly at random at different servers, where files are requested based on the popularity profile. Then, in order to satisfy each request, $\ell$ nearest coded chunks corresponding to that file are routed to the request origin via the shortest path on the graph. Assuming independent linear combinations in different coded chunks, it is easy to see that each file can be decoded at the request origin server. 

In this setup, we consider Zipf file popularity distribution, and network topologies such as grid, random geometric graphs, random $d$-regular graphs, and hypercube. 
We prove that our proposed coded scheme will
result in the maximum load of $O(1)$, if each file is divided into $\ell=\Theta(\log n)$ chunks.
This result should be compared the the maximum load of nearest replica and power of two choices strategies which are $\Theta(\log n)$ and $\Theta(\log \log n)$, respectively \cite{SiaPourShar_TPDS17}.
Interestingly, for grid networks, we show that our proposed coded scheme will result in asymptotically the same communication cost as the baseline uncoded nearest replica strategy. Furthermore, we investigate our findings in finite size networks by performing extensive simulations, through which we also confirm the superiority of our proposal compared to the previous proposed schemes.

\subsection{Related Works}
Using coding in content delivery scenarios has been proposed in previous works. The authors in \cite{Gkantsidis_Rodriguez_Infocom05_NetCode_CDN} and \cite{Lee_2006_CodeTorrent} consider the benefit of using network coding in a P2P based  scenario and VANETs, respectively. Also, the papers \cite{Bilal19} and \cite{Kumar18} investigate the role of network coding in Information Centric Networks, and Critical Infrastructure Networks, respectively. Moreover, \cite{Guven_MultiMulticast_2008} considers the effect of coding on the multiple multicast problem. Interested reader can also see \cite{Magli_2013_NetCodeCDN_aReview} which is a good review on the role of network coding for multimedia delivery. 

On another line of research, the in-network caching idea has been proposed to relieve network congestion as explained in the following. The authors in \cite{Shanmugam_2013_FemtoCaching} investigate optimal coded/uncoded cache content placement for minimizing delivery delay in wireless content delivery scenarios. Following the results in \cite{Shanmugam_2013_FemtoCaching} many researchers have considered the role of caching in wireless settings such as \cite{Yin19}, \cite{Teng19}, and \cite{Lee19}.

Also, in \cite{BGL15}, the authors look at the problem of optimal MDS coded cache placement in wireless edge networks, which is extended to heterogeneous file and cache sizes in \cite{Liao17}. Furthermore the authors in \cite{Zhang18} consider MDS coded cache content placement from an energy consumption perspective. Finally, as it is shown in \cite{Recayte18}, LT codes can be used at the wireless edge caches to minimize backhaul rate.

In all above works, metrics such as the communication rate/cost, delay, and energy consumption in content delivery networks are investigated, and the benefits of using codes are discussed from different perspectives. However, an important aspect which is ignored in previous works is the load balancing issue. In contrast to all the aforementioned research works in this context, our paper is the first  work which investigates the role of coding in load balancing performance through an analytical approach. Here, we derive closed-form asymptotic results for the impact of coding on the communication cost and maximum load of servers in cache networks. Through the framework developed here, we investigate communication cost and maximum load in grid networks, which is then generalized to other network topologies. Moreover, we confirm our analytical findings through extensive simulations, via which we also investigate the effect of network parameters such as number of servers, cache size, popularity profile, and network topology.  

\subsection{Paper Structure}
The rest of paper is organized as follows. Our system model and performance metrics are introduced in Section~\ref{sec:SystemModel}. Then, the proposed coded scheme is presented in Section~\ref{sec:CodedLoadBalancingGrid} and its maximum load and communication cost are asymptotically analyzed for grid networks. Next, our results are extended to more general networks in Section~\ref{sec:CodedLoadBalancingGeneral}. In Section~\ref{sec:Simulations}, simulation results for finite-sized networks are presented. Finally, the paper is concluded in Section~\ref{sec:Conclusions}.

%--------------------------------------------------------------------
%Section-------------------------------------------------------------
\section{System Model}\label{sec:SystemModel}

\subsection{Notation}
Throughout the paper, \emph{with high probability} (w.h.p.) refers to an event that happens with probability $1-1/n^c$, as $n \rightarrow \infty$, for some constant $c>0$. 
Let $G=(V,E)$ be a graph with vertex set $V$ and edge set $E$. 
%where $e(G):= |E|$. 
For $u\in V$ let $\deg(u)$ denote the degree of $u$ in $G$. For every pair of nodes $u,v\in V$,   $\dist_G(u, v)$ denotes the length (number of edges along the path) of a shortest path from node $u$ to $v$ in graph $G$. The neighborhood of $u$ at distance $r$ in  $G$ is defined as    
\[
  B_r(u) \triangleq \left\{v : \dist_G(u, v)\le r ~ {\text {and}}~~ v\in V(G) \right\}.
\]
For a set $A$ we use $\bar{A}$ to denote its complement. To show the complement of an event $\mc{E}$ we use $\neg \mc{E}$.
We use $\mathrm{Po}(\lambda)$ to denote the Poisson distribution with parameter $\lambda$.
The expected value of a random variable $X$ is denoted by $\Ex{X}$.
%
%\comment{ The function $H(Z)$ gives the number of bits that is required to represent the content of the data in $Z$.}
%entropy of random variable $X$. 
%
%\comment{The ceiling of the real number $x$ denoted by $\lceil x \rceil$ is the least integer that is greater than or equal to $x$.
The operator ``$\circ$'' represents the concatenation operator, i.e., the file $W = W_1 \circ W_2$ is generated by appending
the file $W_2$ at the end of the file $W_1$.
Throughout the paper, $H(\cdot)$ represents the binary entropy (\ie, information content), measured in bits.

For asymptotic notation, we use $g(n)=O(f(n))$ if there exist $c$ and $n_0$ such that for $n>n_0$ we have $g(n)< c f(n)$. In this case we also write $f(n)=\Omega(g(n))$. Moreover, we write $g(n)=o(f(n))$ if $\lim_{n\rightarrow\infty} g(n)/f(n)=0$. In this case one can also write $f(n)=\omega(g(n))$. Finally, if $f(n) = O(g(n))$ and $f(n) = \Omega(g(n))$ then we write $f(n) = \Theta(g(n))$.

\subsection{Problem Setting}
We consider a cache network consisting of $n$ caching servers (also called cache-enabled servers) and undirected edges connecting neighboring servers forming a graph $G$. Direct communication is possible only between adjacent servers, and other communications should be carried out in a multi-hop fashion through the network. 

\iffalse
\begin{figure}[t]
\begin{center}
\includegraphics[width=.4\textwidth]{square_grid.pdf}
\end{center}
\caption{A $5 \times 5$ square grid wireline content delivery network consist of $25$ caching servers.}
\label{fig:fig0}
\end{figure}
\fi

Suppose that the cache network is responsible for handling a library of $K$ files $\mathcal{W}=\{W_1,\dots,W_K\}$, each of size $F$ bits, whereas the popularity profile follows a known distribution $\mathcal{P}=\{p_1,\dots,p_K\}$.

We assume that the network operates in two phases, namely, \emph{cache content placement} and \emph{content delivery}. In the cache content placement phase, each server $i$ caches $Z_i \triangleq \Psi_i(W_1,\ldots,W_K)$ such that $H(Z_i)\le MF$ where $M$ is the cache size of each server. 
Here $\Psi_i(\cdot)$ is a function of files in $\mathcal{W}$ that generates data $Z_i$ to be placed in the cache of server $i$.

Consider a time block during which $n$ file requests have arrived uniformly at random among the servers (\ie, graph vertices). Let $R_i$ denote the number of requests (demands) arrived at server $i$. Then, for large $n$, we have $R_i \sim \mathrm{Po(1)}$ for all $1\le i\le n$.

For the library popularity profile $\mc{P}$, we consider the Zipf distribution with parameter $\gamma \geq 0$, where the request probability of the $k$-th popular file is inversely proportional to its rank as follows
\begin{equation}\label{eq:ZipfDist}
p_k=\frac{1/k^\gamma}{\sum\limits_{l=1}^{K}1/l^\gamma},\quad k=1,\dots,K,
\end{equation}
which has been confirmed to be the case in many practical applications \cite{Zipf1_99,Zipf2_07}.

In the content delivery phase, suppose server $i$ has received a set of requests $\mc{W}_i \triangleq \{ W_{f_{i,1}},\ldots, W_{f_{i,R_i}}  \}$, where $f_{i,j}$ is the file index of $j$th  received request of server $i$.
In order to satisfy the demand $W_{f_{i,j}}$, a set of messages $M_{s\rightarrow i} ^{(j)}$ for $s\in\{1,\ldots,n\}$ will be sent from other servers to server $i$ where $M_{s\rightarrow i} ^{(j)}$ (which is a function of $Z_s$) is the message sent from server $s$ to server $i$ to satisfy $j$th request of server $i$.
Then, we say that server $i$ can \emph{successfully decode} the request $W_{f_{i,j}}$ if there exists a decoding function $\Phi_{i,j}(\cdot)$ such that 
\[
\Phi_{i,j} \left( M_{1 \rightarrow i} ^{(j)},\ldots, M_{n \rightarrow i} ^{(j)}, Z_i \right) = W_{f_{i,j}}.
\]

%For any given cache content placement, an assignment strategy determines how each request is mapped to a server. 
%Let $T_i$ denote the number of requests assigned to server $i$ at the end of mapping process.

For a given cache content placement, a \emph{delivery strategy} is defined as follows.
\begin{definition}[Delivery Strategy]
 By assuming full knowledge of cache contents of all the servers, for each file request $f_{i,j}$,  the Delivery Strategy determines the message set $\left\{ M^{(j)}_{s\rightarrow i} \right\}_{s=1}^n$, for all $j \in [1:R_i]$ and $i \in [1:n]$.
\end{definition}

Now, for each strategy\footnote{We use the terms ``strategy'' and ``delivery strategy'' alternatively.},  we define the following metrics.
\begin{definition}[Communication Cost and Maximum Load]\label{def:Metrics}
~
\begin{itemize}
\item The (normalized) \emph{communication cost} (per request) of a strategy is defined as follows 
\begin{equation}\label{eq:ComCostDef}
C \triangleq \frac{1}{nF}  \sum_{s\in [1:n]} \sum_{i\in [1:n]} \sum_{j\in[1:R_i]}  \dist_G(s,i) H \left( M_{s \rightarrow i}^{(j)} \right).
\end{equation}
\item The (normalized) \emph{maximum load} of a strategy is defined as follows
\[
L \triangleq \frac{1}{F} \max_{s\in [1:n]} \sum_{i\in [1:n]} \sum_{j\in[1:R_i]}  H \left( M_{s \rightarrow i}^{(j)} \right).
\]
\end{itemize}
\end{definition}
It should be noted that since file requests are random, $C$ and $L$ are random variables in the above definitions.

For convenience of readers, Table~\ref{tab:Notations} summarizes important notations used throughout the paper.

\begin{table}
\caption{Notations summary.} \label{tab:Notations}
\centering
\begin{tabular}{|c|c|}
	\hline
	Notation & Description\\
	\hline
	$n$ & Number of servers\\
	\hline
	$K$ & Number of files in the library\\
	\hline
	$\mc{W}=\{W_1,\ldots,W_K\}$ & Files in the library\\
	\hline
	$\mc{P}=\{p_1,\ldots,p_K\}$ & Popularity profile\\
	\hline
	$F$ & File size in bits\\
	\hline
	$M$ & Server's cache size\\
	\hline
	$C$ & Communication cost\\
	\hline
	$L$ & Maximum load\\
	\hline
\end{tabular}
\end{table}

%--------------------------------------------------------------------
%Section-------------------------------------------------------------
\section{Coded Load Balancing in Grid Networks} \label{sec:CodedLoadBalancingGrid}
%\comment{We have to describe codded placement. We use random chunks of the original files. More precisely, we have to clearly define the $\Psi_i$ functions.}

%\comment{Explai.n about the delivery phase under the codded placement. How can we decode. Some discussion about the field size (we need large field size).}

In this section, we focus on grid networks where the graph $G$ is a $\sqrt{n} \times \sqrt{n}$ square wireline grid interconnect where $\sqrt{n}$ is assumed to be an integer.

\begin{remark}
	In this section, for the sake of presentation clarity, we may consider a torus with $n$ servers, which helps to avoid boundary effects of the grid, and all the asymptotic results hold for the grid as well.  
\end{remark}

Let us first revisit the baseline schemes with which we compare our proposed scheme's performance. The simplest uncoded scheme which assigns requests to servers is the so-called \emph{nearest replica} strategy, in which files are cached at the servers proportional to their popularity. Then, each request is assigned to the nearest server having a replica of the requested file. More precisely, let us focus on server $i$ at which $R_i$ requests have arrived. For request $j \in \{1,\ldots,R_i\}$, i.e., $W_{f_{i,j}}$, we denote the nearest server to $i$ which has cached this request as $s_j$. Then the delivery strategy will be
\begin{align}
&M_{s_j \rightarrow i}^{(j)} = W_{f_{i,j}}, \\ \nonumber
&M_{s \rightarrow i}^{(j)} = \emptyset \quad \mathrm{for \,\,  all} \quad s \neq s_j.
\end{align}

\begin{remark}
	It should be noted that the nearest replica strategy achieves the minimum communication cost among all \emph{uncoded} schemes. This is true since this scheme minimizes all terms in the summations of \eqref{eq:ComCostDef} (\ie, the definition of communication cost) in Definition~\ref{def:Metrics}.
\end{remark}

While this scheme performs well in terms of communication cost, its maximum load will be of order $\log(n)$ (for more details refer to \cite{SiaPourShar_TPDS17}). The second baseline scheme considered in this paper is the one proposed in \cite{SiaPourShar_TPDS17}, which is based on the concept of power of two choices \cite{ABKU99}. In this scheme, instead of allocating the request to the nearest server having cached that request, the current load of two servers in a limited distance of the request origin is queried, and then the request is assigned to the server with the lower load. In a certain regime of problem parameters, this will result in the maximum load of $\log \log(n)$, which is achieved at the cost of queries' complexity, and higher communication cost compared to the nearest replica strategy \cite{SiaPourShar_TPDS17}.

In contrast, in this section we propose a coded scheme which achieves the maximum load of order $O(1)$, while maintaining almost the same communication cost as the nearest replica strategy, and also does not require querying current load of any servers. To this end, we consider a coded cache placement in the servers using random linear coding, \eg, Fountain-like codes \cite{Luby_2002_LT_Codes, Byers_1998_Fountain_Code, Shokrollahi06_Raptor_Code}. 

Let us first take each file $W_k$ of $F$ bits, and partition it into $\ell$ equal-sized chunks, \ie, $W_k^{(r)}$ where $r\in [1:\ell]$, each of $F/ \ell$ bits. 
Thus, each file $W_k$ becomes the concatenation of the chunks $W_k^{(r)}, r \in [1:\ell]$, \ie,
$W_k = W_k^{(1)} \circ W_k^{(2)} \circ \cdots \circ W_k^{(\ell)}$.
Then, we define a random linear operator $\mc{L}$ as
\begin{equation}\label{eq:CodingOperator}
\mc{L}(W_k) \triangleq \sum_{r=1}^\ell \alpha_r W_k^{(r)},
\end{equation}
where $\alpha_r$'s are constants chosen uniformly at random over a finite field $\mbb{F}_{q}$ with $q = \Theta(2^ {F/\ell})$,
and the summation is over $\mbb{F}_q$ (assuming there is an injective mapping from the file chunks $W_k^{(r)}$ to elements of the finite field $\mbb{F}_q$). Moreover, we assume that each use of the operator $\mc{L}$ is independent of other instances.

In the cache content placement phase, each server stores $\ell M$ \emph{coded chunks} of the files according to the popularity distribution $\mc{P}$ as explained in Algorithm~\ref{alg:Coded_Cache_Placement}. 

\begin{algorithm}[H]
\caption{Coded cache content placement for server $i$}
\label{alg:Coded_Cache_Placement}
\begin{algorithmic}[1]
  \Require $\ell$, $M$, $\mathcal{P}$, and $\mc{W}$
  \Repeat
      \State Sample $k$ according to distribution $\mathcal{P}$
      \State Store a coded chunk $\mc{L}(W_k)$ at server $i$'s cache %cache $Z_i$
  \Until{server $i$'s cache is full}
\end{algorithmic}
\end{algorithm}

In the content delivery phase, we assume $n$ requests arrive uniformly at random at network servers. Let us consider server $i$ which has $\mathrm{Po}(1)$ requests, \ie, $R_i \sim \mathrm{Po}(1)$, for large $n$. For satisfying each request, $\ell$ nearest 
(in terms of shortest path in the grid network)
coded chunks of the file corresponding to that request should be routed to this server according to Algorithm~\ref{alg:Coded_Content_Delivery}.

\begin{algorithm}[H]
\caption{Coded delivery phase for server $i$}
\label{alg:Coded_Content_Delivery}
\begin{algorithmic}[1]
  \Require $\{ f_{i,j} \}_{j=1}^{R_i}$, $\ell$
  \For{$j=1:R_i$}
      \State $\mc{I} :=$ indices of the $\ell$ nearest servers caching coded chunks corresponding to file $W_{f_{i,j}}$
      \For{$s\in \mc{I}$} 
          \State $M_{s \rightarrow i}^{(j)} :=$ a coded chunk of file $W_{f_{i,j}}$ at server $s$
          \State Forward $M_{s \rightarrow i}^{(j)}$ from server $s$ to server $i$ via the shortest path in the grid network
      \EndFor
  \EndFor
\end{algorithmic}
\end{algorithm}

It is clear that if the field size $q$ is large enough, then Algorithm~\ref{alg:Coded_Content_Delivery} will successfully
send the necessary file chunks to server $i$, so that server $i$ can reconstruct the requested files by its users with the probability of order $1-O(1/q)$, \eg, see \cite[Lemma~1]{SiaFragDig_IT12}.
% satisfy all requests of server $k$ w.h.p., \ie, probability of order $O(1/q)$.

In the next theorem, we characterize the load balancing performance of the proposed scheme.

\begin{thm}[Maximum Load of Grid]\label{thm:maxloadCode}
	For a $\sqrt{n} \times \sqrt{n}$ grid network,
	suppose that $\mathcal{P}=\{p_1,p_2,\ldots,p_K\}$ be the file popularity distribution and let the number of file chunks be $\ell=\Omega(\log n)$. Then, the maximum load of Algorithm~\ref{alg:Coded_Content_Delivery} is $O(1)$,  w.h.p.
\end{thm}
%----------

\begin{proof}
	In Algorithm~\ref{alg:Coded_Content_Delivery}, we have to find $\ell$ coded chunks corresponding to each arriving request. Let us focus on server $u$. Define $\mc{E}_u$ to be the event that for all $k\in [1:K]$, one can find $\ell$ coded chunks corresponding to $W_k$ in $B_{r_k}(u)$ where $r_k \triangleq \sqrt{\alpha \ell/ \pt_k}$ for some positive constant $\alpha$. In above $\pt_k \triangleq 1-(1-p_k)^{M \ell}$ is the probability that an arbitrary server has cached at least a coded chunk of $W_k$. Then, we can state Lemma \ref{lem:Event_E}, which appears after the Theorem proof.

%	By assumption, each server has at most $M \ell$ distinct coded chunks. 
Now, for every fixed server $u\in V(G)$, let us define the indicator random variable $Y_{u, j}$, taking one  if the $j$-th request ($j\in [1:n]$) has asked a coded chunk from server $u$ and zero otherwise.
Notice that a server might be asked to respond a request if the following events occur:
	\begin{itemize}
		\item[(E1)] a request for file $W_k$  is born at a neighborhood of radius $r_k$ of the server, and
		\item[(E2)] a coded chunk of the requested file is cached in the server.
	\end{itemize}
	Thus, we can write
	\begin{align*}
		\Pr{Y_{u, j}=1}  \hspace{-30pt} &  \\
		&=\Pr{Y_{u, j}=1| \mathcal{E}}\Pr{\mathcal{E}}\\ 
		&\quad\quad +  \Pr{Y_{u, j}=1| \neg\mathcal{E}}\Pr{\neg\mathcal{E}}\nonumber\\
		&\stackrel{\text{(a)}}{=} \Pr{Y_{u, j}=1| \mathcal{E}}\Pr{\mathcal{E}}\\ 
		&\quad\quad +  \Pr{Y_{u, j}=1| \neg\mathcal{E}} \times o(1/n)\nonumber\\
		& \le \Pr{Y_{u, j}=1| \mathcal{E}} + o(1/n) \\
		&= \sum_{k=1}^K \Pr{Y_{u, j}=1| \mathcal{E}, W_k \  \mathrm{requested}} p_k + o(1/n)\\
		&\stackrel{\text{(b)}}\le \sum_{k=1}^{K} \frac{|B_{r_k}(u)|}{n}\pt_k\cdot p_k + o(1/n)\\
		& \stackrel{\text{(c)}}{\le} \frac{2\alpha\ell(1+o(1))}{n}+o(1/n)\\
		&= \frac{2\alpha\ell(1+o(1))}{n},
	\end{align*}
	where $\mathcal{E}=\cap_{u\in V(G)}\mc{E}_u$, (a) follows from Lemma~\ref{lem:Event_E}, (b) follows from considering the probabilities of events (E1) and (E2) above, and (c) follows from $B_{r_k}(u)=2r_k^2(1+o(1))$ since $G$ is a grid.
	
	Now, let $S_u=\sum_{j=1}^nY_{u,j}$ count the number of requests that are responded by server $u$, during allocating the total $n$ requests. Hence, we have
	\[
	\Ex{S_u}=\sum_{j=1}^n\Ex{Y_{u,j}}\le 2\alpha\ell(1+o(1)).
	\]
	Applying a Chernoff bound for $S_u$ implies that 
	\[
	\Pr{S_u\geq (1+\delta) 2\alpha\ell} \leq \exp(-\delta^2 \alpha \ell)=o(1/n^2),
	\] 
	for appropriate choices of constants $\delta$ and $\alpha$, and since $\ell=\Omega\left(\log n\right)$.
	\iffalse
	\[
	\Pr{S_u\geq 2\beta \alpha\ell} = \exp(-\frac{4\alpha^2 C^2\ell}{\Ex{Z_u}})\le\exp(-{2\alpha C\ell})=o(1/n^2),
	\]
	where by choosing a suitable constant $\alpha$, the last equality holds.
	\fi 
	Taking union bound over all servers shows that 
	each server is requested at most $O\big(\ell\left(1+o(1)\right)\big)$ times
	where each request involves sending $F/\ell$ bits. Hence, it has to handle at most $O(1)$ bits. This concludes the proof.
\end{proof}

\begin{lemma}\label{lem:Event_E}
	%For the event $\mc{E}_u$ defined above, we have $\Pr{\mc{E}_u} = 1-o(1/n)$.
	For the event $\mc{E}_u$ defined above, we have
	$\Pr{\mc{E}_u} =1-o(n^{-2})$. Then, the event $\mathcal{E}=\cap_{u\in V(G)}\mc{E}_u$ happens with probability $\Pr{\mc{E}} =1-o(n^{-1})$.
\end{lemma}
\begin{proof}
	For a given $k\in [1:K]$, let $X_{u,k}$ denote an indicator random variable taking $1$ if node $u$ has cached a coded chunk of  $W_k$ and zero otherwise. Thus,
	\[
	\Pr{X_{u,k}=1}=\pt_k=1-(1-p_k)^{M\ell},
	\]
	where according to Algorithm~\ref{alg:Coded_Cache_Placement}, $p_k$ is  the probability that a coded chuck of $W_k$ has been cached on a server. It is easy to see that $Z_{u,k}=\sum_{v\in B_{r_k}(u)}X_{v,k}$ counts the number of servers that have cached $W_k$ in a neighborhood of $u$. 
	For every node $u$, we have 
	\[|B_{r_k}(u)|=\Theta\left(r_k^2\right)=\frac{\alpha  \ell}{\pt_k}.\]
	Since $X_{u,k}$'s are i.i.d random variables and by linearity of expectation, we get that 
	\[\Ex{Z_{u,k}}=r_k^2 \pt_k=\alpha\ell.
	\]
	Provided $\alpha>2$ and $\alpha l/8>\log{n}$ is an appropriate constant,   applying a Chernoff bound results that 
	\begin{align*}
	\Pr{Z_{u,k}<\ell} &\le \Pr{Z_{u,k}\le \Ex{Z_{u,k}}/2}\le \mathrm{e}^{-\Ex{Z_{u,k}}/8}\\
	&= \mathrm{e}^{-\alpha \ell/8}=o(1/{n^3}),
	\end{align*}
	where the last equality follows from  $\ell=\Omega(\log n)$. By applying union bound over all $K\le n$ files in the library, we have 
	\[
	\sum_{k\in[1, K]}\Pr{Z_{u,k}\le \ell}=o(1/{n^2}).
	\]
	This implies that 	$\Pr{\mc{E}_u}=1-o(1/n^2)$ and hence by another application of the union bound we get 
	\begin{align*}
	\Pr{\mc{E}}&=1-\Pr{\cup_{u\in V(G)}\neg \mc{E}_u}\\
	&\ge1-\sum_{u\in V(G)}\Pr{\neg\mc{E}_u}\\
	&=1-o(1/n).
	\end{align*} 
\end{proof}

Next, we analyze the communication cost of the proposed scheme in Algorithm~\ref{alg:Coded_Content_Delivery}. To this end, we first prove a general expression for communication cost of requests with arbitrary popularities in Theorem \ref{thm:comcostCode}. Then, in Corollary \ref{cor:CommCost} we specialize the result of Theorem \ref{thm:comcostCode} to the Zipf popularity profile.

\begin{thm}[Communication Cost] \label{thm:comcostCode}	
For a grid network of size $\sqrt{n} \times \sqrt{n}$,
	suppose that $\mathcal{P}=\{p_1,p_2,\ldots,p_K\}$ be the file popularity distribution, and assume $K=O(n)$ and $\ell=\Omega(\log n)$. 
Define $\pt_k \triangleq 1-(1-p_k)^{M\cdot \ell}$ where we assume $\sqrt{\ell/\pt_k}=o(\sqrt{n})$ for every $k \in [1:K]$. Then, w.h.p., the communication cost for every requested file $W_k$, $1\le k\le K$, is $\Theta\left(\sqrt{\ell/\pt_k}\right)$. Moreover, we have 
\begin{equation}
\Ex{C}=\sum_{k=1}^K \Theta\left(\sqrt{\ell/\pt_k}\right)p_k.
\end{equation}
\end{thm}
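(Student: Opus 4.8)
The plan is to reduce the per-file communication cost to a \emph{covering-radius} question: for a fixed file $W_j$ and a fixed requesting node $u$, determine the smallest radius $r$ for which the ball $B_r(u)$ already contains $\ell$ coded chunks of $W_j$, and show this radius is $\Theta(\sqrt{\ell/\pt})$ in both directions. Since by Algorithm~\ref{alg:Coded_Content_Delivery} a request is served by fetching the $\ell$ nearest such chunks, the cost of that request is the average of $\ell$ distances, each at most this radius, and $\Ex{C}$ then follows by averaging the per-file cost against the request probability $p_j$. Two facts drive everything: on the torus $|B_r(u)|=2r(r+1)+1=2r^2(1+o(1))$, valid because $r=\Theta(\sqrt{\ell/\pt})=o(\sqrt n)$ by hypothesis; and, by Algorithm~\ref{alg:Coded_Cache_Placement}, each of a node's $M\ell$ cached chunks independently encodes $W_j$ with probability $p_j$, so the indicator $X_{v,j}$ that node $v$ holds at least one chunk of $W_j$ has $\Pr{X_{v,j}=1}=1-(1-p_j)^{M\ell}=\pt$.

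For the \textbf{upper bound} I would fix $r$ so that, with $Y_u\triangleq\sum_{v\in B_r(u)}X_{v,j}$, the mean $\Ex{Y_u}=|B_r(u)|\,\pt\ge 2\ell$; since $|B_r(u)|=2r^2(1+o(1))$ this forces $r=\Theta(\sqrt{\ell/\pt})$. Because $\ell=\Omega(\log n)$ this mean is $\Omega(\log n)$, so the Chernoff bound of Lemma~\ref{lem:cher} gives $\Pr{Y_u<\ell}\le\Pr{Y_u<\tfrac12\Ex{Y_u}}=o(1/n^{3})$. A union bound over all $n$ requesting nodes and all $K=O(n)$ files shows that, w.h.p.\ simultaneously, every request finds at least $\ell$ distinct chunk-holding nodes within distance $r=O(\sqrt{\ell/\pt})$; fetching one independently coded chunk from each of the $\ell$ nearest such nodes yields $\ell$ linearly independent chunks w.h.p.\ over $\mbb{F}_q$ (the large-field decodability already noted fails only with probability $O(1/q)$), so the request is satisfied at cost $O(\sqrt{\ell/\pt})$.

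For the matching \textbf{lower bound} I would take $r'=\varepsilon\sqrt{\ell/\pt}$ with $\varepsilon$ a small constant, giving $\Ex{Y_u}=2\varepsilon^2\ell(1+o(1))$. By Markov's inequality $\Pr{Y_u\ge \ell/2}\le 4\varepsilon^2(1+o(1))$, so with constant probability fewer than $\ell/2$ chunk-holding nodes lie inside $B_{r'}(u)$; on that event at least $\ell/2$ of the $\ell$ nearest chunks sit at distance exceeding $r'$, so the average of the $\ell$ fetched distances is at least $\tfrac12 r'=\Omega(\sqrt{\ell/\pt})$. Hence a request for $W_j$ costs $\Omega(\sqrt{\ell/\pt})$ with constant probability, and therefore in expectation; combined with the upper bound, the expected per-request cost for $W_j$ is $\Theta(\sqrt{\ell/\pt})$.

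To assemble $\Ex{C}$, observe from Definition~\ref{def:Metrics} that each forwarded message $M^{(j)}_{l\to i}$ is a single chunk of entropy $F/\ell$, so one request for $W_j$ contributes $(F/\ell)\sum_{m=1}^{\ell}d_m=F\overline{d}$ to $nF\cdot C$, where $\overline{d}$ is the average fetched distance with $\Ex{\overline d}=\Theta(\sqrt{\ell/\pt})$. Since each request independently targets $W_j$ with probability $p_j$ and the number of requests $\sum_i D_i$ has mean $n$, Wald's identity gives $\Ex{C}=\tfrac1n\Ex{\sum_{\text{requests}}\overline d}=\sum_{j=1}^{K}\Theta(\sqrt{\ell/\pt})\,p_j$. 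The step I expect to be hardest is the lower bound: the Chernoff-plus-union upper bound is routine, but converting ``few chunks in a small ball'' into a bound on the \emph{average} of the $\ell$ nearest distances (rather than merely their maximum) requires the Markov argument above, and one must check that the hidden constants remain uniform as $j$ ranges over the $K=O(n)$ files so that the per-file $\Theta(\cdot)$ survives the weighted sum defining $\Ex{C}$.
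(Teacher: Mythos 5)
Your proposal follows essentially the same route as the paper: the same indicators $X_{v,j}$ with success probability $\pt$, the same ball-counting variable $Y_u$, a Chernoff-plus-union-bound upper bound at radius $\Theta\left(\sqrt{\ell/\pt}\right)$, a Markov-inequality lower bound at radius $\varepsilon\sqrt{\ell/\pt}$, and a final averaging against $p_j$. However, one quantitative step fails as written. In the upper bound you choose $r$ so that $\Ex{Y_u}\ge 2\ell$ and assert that, because $\ell=\Omega(\log n)$, Chernoff gives $\Pr{Y_u<\tfrac12\Ex{Y_u}}=o(1/n^{3})$. That implication does not hold for an arbitrary implicit constant: if, say, $\ell=\log n$, then Lemma~\ref{lem:cher} only yields $\exp(-\delta^{2}\Ex{Y_u}/3)=\exp(-(\log n)/6)=n^{-1/6}$, and your union bound over the $n\cdot K=\Theta(n^{2})$ node-file pairs becomes vacuous. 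Since the hypothesis $\ell=\Omega(\log n)$ pins down no particular constant, a mean of $2\ell$ alone cannot deliver an $n^{-3}$ tail. This is exactly the point where the paper takes explicit care: it inflates the radius to $r_j=\sqrt{z/\pt}$ with $z=\max[6\log n,5\ell]$, which forces $\Ex{Y_u(j,r_j)}\ge 12\log n$ regardless of the constant hidden in $\Omega(\log n)$, makes the $o(1/n^{3})$ tail legitimate, and still keeps $z=O(\ell)$ so that the radius remains $\Theta\left(\sqrt{\ell/\pt}\right)$. Your argument is repaired by the same one-line change: require $\Ex{Y_u}\ge\max(2\ell,\,C\log n)$ for a sufficiently large absolute constant $C$.

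Two smaller points of comparison. First, your lower bound shows a request for $W_j$ costs $\Omega\left(\sqrt{\ell/\pt}\right)$ with constant probability and hence in expectation; that suffices for the $\Ex{C}$ formula, but the theorem's ``w.h.p.''\ phrasing is what the paper's extra step supplies, namely a second Chernoff bound aggregating the per-request failure indicators over all $n$ requests to conclude that, w.h.p., a constant fraction of requests are expensive. Your version omits this aggregation. Second, your observation that ``fewer than $\ell/2$ chunk-holders in $B_{r'}(u)$'' forces the \emph{average} of the $\ell$ fetched distances (not merely the maximum) to be at least $r'/2$ is a refinement the paper glosses over, and it makes the passage from ball-counting to the cost in Definition~\ref{def:Metrics} airtight.
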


\begin{proof}
	%(\comment{rewrite the proof})Let us assume that a  request for  file $W_j$ at some server is born.
	{\bf Upper Bound:}
	Suppose that $u\in G$ is an arbitrary server in the grid and  a request for file $W_k$ arrives at server $u$.
	For every server $v\in G$, let $X_{v, k}$ denote the indicator random variable taking $1$ if server $v$ has cached at least one coded chunk of $W_k$, and zero otherwise. Then, for every positive number $r$,
	$Y_u(k,r)=\sum_{v\in B_r(u)}X_{v,k}$  denotes the number of servers that have cached a coded chunk of $W_k$ in $B_r(u)$. %In the following we see that by choosing the right $r$, there are at least $\ell$ servers each with $F/\ell$ of $W_j$. 
	Notice that  
	\[
	\tilde{p}_k=\Pr{X_{v,k}=1}=1-\left(1-p_k\right)^{M\cdot \ell}.
	\] 
	Thus,
	\[
	\Ex{Y_u(k,r)}=|B_r(u)|\cdot\pt_k.
	\]
	We know that, for a grid network, $|B_r(u)|=2r(r+1)+1=2r^2(1+o(1))$. Hence by choosing $r_k=\sqrt{z/\pt_k}$, where $z=\max[6\log n, 5\ell]$, we get
	\[
	\Ex{Y_u(k,r_k)}=2z(1+o(1))\geq 12\log n(1+o(1)).
	\]
	Since $X_{v,k}$'s are independent and identical indicator random variables,   
	applying a Chernoff bound 
	%(e.g., see Lemma~\ref{lem:cher} in Appendix) 
	for $Y_u(k,r_k)$  yields that 
	\[
	\Pr{Y_u(k,r_k)\leq 0.1\Ex{Y_u(k,r_k)}}=o(1/n^3).
	\]
	Thus, with  probability $1-o(1/n^3)$ 
	\[
	Y_u(k,r_k)\ge 0.1\Ex{Y_u(k,r_k)}=0.2z(1+o(1))\ge \ell,
	\] 
	which means 
	$B_r(u)$ contains at least  $\ell$ servers that have cached a coded chunk of $W_k$. 
	
	Let $\mathcal{E}_{u,k}$ denote the event that server $u$ requests for file $W_k$ and $Y_u(k,r_k)<\ell$.  
	Now by the union bound over all $n$ servers and $K=O(n)$ files we have,
	\[
	\Pr{\cup_{u, k}\mathcal{E}_{u,k}} \le \sum_{u, k}\Pr{\mathcal{E}_{u,k}}=
	\sum_{u, k}o(1/n^3)=o(1/n).
	\] 
	Since the number of bits in each coded chunk is $F/\ell$ and with probability $1-o(1/n)$ we can find all the required coded chunks in $B_r(u)$, the communication cost defined in Definition~\ref{def:Metrics} is $O\left(\sqrt{\ell / \pt_k}\right)$, since $z=O(\ell)$.

	{\bf Lower Bound:}
	If we set $r_k=o (\sqrt{z/\pt_k})$, then $\Ex{Y(k,r_k)}=o(z)=o(\ell)$.
	Thus, by Markov inequality for any constant $\alpha>0$
	\[
	\Pr{Y_u(k,r_k)> \alpha \Ex{Y_u(k,r_k)}}\leq 1/\alpha.
	\] 
	So for every $u$, with probability at least $1-1/ \alpha$, 
	\[
	Y_u(k,r_k) < \alpha \Ex{Y_u(k,r_k)}<\ell.
	\]
	
	Let $T_{u,j}$ denote the indicator random variable taking one if the $j$th request, received by server  $u$, fails to find $\ell$ coded chunks of the requested file in the set $B_r(u)$. Thus, we have 
	$\Pr{T_{u,j}=1}> 1-1/ \alpha$.
	Also let $S=\sum_{u=1}^n \sum_{j=1}^{R_u} T_{u,j}$ denote the total number of failures. Then,   
	\[
	\Ex{S}=n \cdot \Pr{T_{u,j}=1}>(1-1/\alpha)n.
	\]
	Since the requests are independent, another
	application of the Chernoff bound for random variable $S$ results that  
	\[
	\Pr{S<\Ex{S}/2}<\exp({-\Omega(n)}).
	\]
	Hence, w.h.p. 
	at least $(1-1/\alpha)n/2$ requests 
	cannot be responded in the  $r$-neighborhood of the requesting server, and the communication cost for $W_k$ is
	$\Omega(\sqrt{\ell/\pt_k})$. 
	
	Since the upper and lower bounds meet, the communication cost of requesting file $W_k$ is $\Theta(\sqrt{\ell/\pt_k})$.	
	%	Each server caches $F/\ell$ of a file, say $W_k$, with  probability $\pt_k$ and a request has to be  responded by $\ell$ servers. So the request has to probe $\ell/\pt_k$ servers in expectation and hence the average communication cost for file $W_k$ is $\sqrt{\ell/\pt_k}$. Thus,   
	Consequently, by averaging over the library with probability distribution $\mc{P}$, the average communication cost is
	\[
	\Ex{C}=\sum_{k=1}^K \Theta\left(\sqrt{\ell/\pt_k} \right) p_k.
	\]
\end{proof}

The following corollary of Theorem~\ref{thm:comcostCode} characterizes the communication cost of our problem under the Zipf popularity profile.

\begin{cor}\label{cor:CommCost}
For a grid network of size $\sqrt{n} \times \sqrt{n}$,
suppose that the cache size $M$ of each server  is a constant, the number of files is $K=n^{\delta}$, for any $\delta\in (0,1]$, and the number of file chunks is $\ell=\Theta(\log n)$. Then, for Zipf popularity distribution with parameter $\gamma$, the average communication cost is
	\begin{equation*}
\Ex{C} = \left\{
\begin{array}{llll}
\Theta\left(\sqrt{K/M}\right) &: &\quad 0\leq \gamma <1, \\
\Theta\left(\sqrt{K/M\log K}\right) &: &\quad  \gamma=1,\\
\Theta\left({K^{1-\gamma/2}/\sqrt{M}}\right) &: &\quad  1<\gamma<2,\\
\Theta\left(\log K/\sqrt{M}\right) &: &\quad  \gamma=2,\\
\Theta(\sqrt{\ell}) &: &\quad  \gamma>2.
\end{array}
\right.	
\end{equation*} 
\end{cor}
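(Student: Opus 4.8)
The plan is to start from the exact characterization
\[
\Ex{C}=\sum_{j=1}^K \Theta\!\left(\sqrt{\ell/\pt}\right) p_j
\]
supplied by Theorem~\ref{thm:comcostCode}, substitute the Zipf profile, and evaluate the sum regime by regime in $\gamma$. The first step is to understand the two-regime behaviour of $\pt=1-(1-p_j)^{M\ell}$. A Taylor expansion gives $\pt=\Theta(M\ell\,p_j)$ whenever $M\ell\,p_j=o(1)$, while $\pt=\Theta(1)$ once $M\ell\,p_j=\Omega(1)$. Consequently the per-file cost satisfies $\sqrt{\ell/\pt}=\Theta\!\left(1/\sqrt{M p_j}\right)$ for the ``unpopular'' files and $\sqrt{\ell/\pt}=\Theta(\sqrt{\ell})$ for the ``popular'' ones, with the crossover occurring at the threshold rank $j^\star$ defined by $p_{j^\star}\approx 1/(M\ell)$. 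Before splitting the sum I would check that the hypothesis $\sqrt{\ell/\pt}=o(\sqrt n)$ of Theorem~\ref{thm:comcostCode} holds under $M=\Theta(1)$, $\ell=\Theta(\log n)$, $K=n^\delta$; this reduces to $M p_K=\omega(1/n)$, which holds for every $\delta<1$ and is the one place where the boundary $\delta=1$ needs separate care.

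Next I would locate $j^\star$ using the Zipf normaliser $\sum_{i=1}^K i^{-\gamma}$, which is $\Theta(K^{1-\gamma})$ for $\gamma<1$, $\Theta(\log K)$ for $\gamma=1$, and $\Theta(1)$ for $\gamma>1$. Solving $p_{j^\star}M\ell\approx 1$ then yields $j^\star<1$ for $0\le\gamma<1$ (because $M\ell=\Theta(\log n)\ll K^{1-\gamma}$ when $\gamma<1$), $j^\star=\Theta(1)$ for $\gamma=1$, and $j^\star=\Theta\!\left((M\ell)^{1/\gamma}\right)=\Theta\!\left((\log n)^{1/\gamma}\right)$ for $\gamma>1$. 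With $j^\star$ in hand I would write
\[
\Ex{C}=\Theta\!\left(\sqrt{\ell}\sum_{j\le j^\star}p_j\right)+\Theta\!\left(\frac{1}{\sqrt M}\sum_{j> j^\star}\sqrt{p_j}\right)
\]
and evaluate the two partial sums. For $0\le\gamma<1$ the threshold falls below $1$, so the first sum is empty and every file sits in the unpopular regime; the cost collapses to $\frac{1}{\sqrt M}\sum_{j=1}^K\sqrt{p_j}$, and since $\sum_{j=1}^K j^{-\gamma/2}=\Theta(K^{1-\gamma/2})$ together with the normaliser gives $\sum_j\sqrt{p_j}=\Theta(\sqrt K)$, we recover $\Ex{C}=\Theta(\sqrt{K/M})$.

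For $\gamma\ge 1$ the strategy is to compute both partial sums and keep the dominant one. Since $\sum_{j\le j^\star}p_j=\Theta(1/\log K)$ at $\gamma=1$ and $=\Theta(1)$ for $\gamma>1$, the popular contribution is $\Theta(\sqrt\ell/\log K)$ or $\Theta(\sqrt\ell)=\Theta(\sqrt{\log n})$, respectively. The unpopular contribution $\frac{1}{\sqrt M}\sum_{j>j^\star}\sqrt{p_j}$ evaluates, via $\sum j^{-\gamma/2}$, to $\Theta(\sqrt{K/(M\log K)})$ at $\gamma=1$, to $\Theta(K^{1-\gamma/2}/\sqrt M)$ for $1<\gamma<2$, to $\Theta(\log K/\sqrt M)$ at $\gamma=2$, and to $\Theta\!\left((\log n)^{1/\gamma-1/2}/\sqrt M\right)=o(1)$ for $\gamma>2$. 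Comparing the two contributions shows that the unpopular part dominates for $1\le\gamma\le 2$ (it is polynomially large in $K$ for $\gamma<2$ and $\Theta(\log n)$ at $\gamma=2$, both beating $\sqrt{\log n}$), while for $\gamma>2$ the constant-order popular term $\Theta(\sqrt\ell)$ finally overtakes the vanishing unpopular part, giving $\Ex{C}=\Theta(\sqrt\ell)$. This crossover at $\gamma=2$ is the delicate point of the argument.

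The main obstacle I anticipate is the bookkeeping of the $\Theta$-constants when summing asymptotic per-term estimates: to conclude $\Ex{C}=\Theta(\cdot)$ one must ensure the summed upper bounds and the summed lower bounds agree, which is legitimate here only because the per-file bounds $\Theta(\sqrt{\ell/\pt})$ are uniform in $j$. Equally delicate is justifying that the split at $j^\star$ is harmless; fortunately, at $j=j^\star$ both expansions of $\pt$ give the same estimate $\sqrt{\ell/\pt}=\Theta(\sqrt\ell)$, so the $O(1)$ band of ranks around the crossover where neither expansion is tight contributes negligibly. The final care is the comparison between $\sqrt{\log n}$ and $\log K=\delta\log n$ at $\gamma=2$, where the two quantities are of the same polylogarithmic order up to the exponent $1/2$.
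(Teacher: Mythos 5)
Your proposal follows essentially the same route as the paper's own proof: the paper splits the library at the threshold set $A=\{j : p_j M\ell>\epsilon\}$ with $j^*=\max A$, writes $\Ex{C}$ as a popular part $\Theta(\sqrt{\ell})\sum_{j\le j^*}p_j$ plus an unpopular part $\sum_{j>j^*}\Theta\left(\sqrt{p_j/M}\right)$, observes $j^*=O\left((\log n)^{1/\gamma}\right)$ so the split point can be neglected, and evaluates the tail sums via the asymptotics of $\Lambda(\gamma,s)=\sum_{j>s}j^{-\gamma}$ --- which is exactly your crossover rank $j^\star$, your two-regime expansion of $\pt$, and your regime-by-regime bookkeeping, including the $\gamma>2$ crossover where the $\Theta(\sqrt{\ell})$ term takes over. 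The one place you go beyond the paper is the check of the hypothesis $\sqrt{\ell/\pt}=o(\sqrt{n})$ of Theorem~\ref{thm:comcostCode} (the paper silently skips it), but your conclusion that $Mp_K=\omega(1/n)$ ``holds for every $\delta<1$'' is inaccurate: for $\gamma>1$ one has $p_K=\Theta\left(n^{-\gamma\delta}\right)$, so the condition fails whenever $\delta\ge 1/\gamma$, and closing that case rigorously would require capping the per-file cost of such tail files at the diameter $O(\sqrt{n})$ and showing their total contribution $\sum_j \sqrt{n}\,p_j$ over the offending ranks is negligible --- a gap your write-up shares with (indeed inherits from) the paper itself.
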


\begin{proof}
	To show this, for some small constant $\epsilon>0$, let us define two sets, say $A \triangleq \{k : 1\le k\le K,~  p_k M\ell>\epsilon\}$ and its complement $\bar{A}$. So for every $k\in A$, we have $\pt_k=\Theta(1)$ and for every $k\in \bar{A}$,  
	\[
	\pt_k=1-(1-p_k)^{M\ell}\approx p_k M\ell.
	\] 
	Thus, the average communication cost reduces to the following summation 
	\[
	\Ex{C}=\sum_{k\in A} \Theta\left(\sqrt{\ell} \right)p_k +\sum_{k\in \bar{A}} \Theta\left(\sqrt{p_k/M}\right).
	\]
	For Zipf distribution, $p_k$'s are decreasing in $k$ so let us define $k^*=\max A$ if $A$ is not empty and $k^*=0$, otherwise. Also let $p_0=0$. Then,
	\begin{align}\label{averagecc}
	\Ex{C}=
	%	\underbrace{	\Theta(\sqrt{\ell})\frac{\sum_{j=0}^{j^*}j^{-\gamma}}{\sum_{j=1}^K j^{-\gamma}}}_{S_1(\gamma)}+ \underbrace{\frac{ \sum_{j=j^*+1}^Kj^{-\gamma/2}}{\left(M\sum_{j=1}^K j^{-\gamma}\right)^{1/2}}}_{S_2(\gamma)}, \\
	\underbrace{\Theta(\sqrt{\ell}){\sum_{k=0}^{k^*}p_k}}_{S_1(\gamma)}+ \underbrace{ \sum_{k=k^*+1}^K \Theta\left(\sqrt{p_k/M}\right)}_{S_2(\gamma)}.
	\end{align}
	Now let us estimate  $S_1(\gamma)$ in (\ref{averagecc}). It is clear that 
	$0\leq \sum_{k=0}^{k^*}p_k<1$, thus
	\begin{equation}\label{S1estim}
	S_1(\gamma) = \left\{
	\begin{array}{llll}
	O(\sqrt{\ell}) &: &\quad 0\le \gamma \le 1, \\
	\Theta (\sqrt{\ell}) &: &\quad \gamma>1,
	\end{array}
	\right.	
	\end{equation}
	where $S_1(\gamma)=\Theta(\sqrt{\ell})$ as $p_1=\Theta(1)$ when $\gamma>1$.
	In what follows we provide an estimation for $S_2(\gamma)=\sum_{k=k^*+1}^K \Theta(\sqrt{p_k/M})$.
	Let us define $\Lambda(\gamma, s)\triangleq \sum_{k=s+1}^K k^{-\gamma}$. So
	\begin{align*}
	S_2(\gamma) &= \sum_{k=k^*+1}^K\Theta\left(\frac{k^{-\gamma/2}}{\sqrt{M\Lambda(\gamma, 0)}}\right) %\\ &
	=\Theta\left(\frac{\Lambda(\gamma/2, k^*)}{\sqrt{M\Lambda(\gamma, 0)}}\right).
	\end{align*}
	
	Next, in order to proceed, we need Lemma \ref{lem:LambdaBounds} stated after the Theorem's proof. From Lemma~\ref{lem:LambdaBounds}, we know that
	
	\begin{equation*}
	\Lambda(\gamma, k^*) = \left\{
	\begin{array}{llll}
	\Theta\left(K^{1-\gamma}\right)-\Theta({k^*}^{1-\gamma}) &: &\quad 0 \leq \gamma <1, \\
	\Theta\left({\log K}\right)-\log k^* &: &\quad  \gamma=1, \\
	O\left(1\right) &: &\quad \gamma>1.
	\end{array}
	\right.	
	\end{equation*}
	Note that by the definition of $k^*$, we have
	\[
	\frac{\epsilon}{M\ell}\le p_{k^*}< {k^*}^{-\gamma},
	\]
	and hence	$k^*\le \left(\frac{M\ell}{\epsilon}\right)^{1/\gamma}=O((\log n)^{1/\gamma})$. This implies that we can ignore terms ${k^*}^{1-\gamma}$ and $\log k^*$ in comparison with $\Theta(K^{1-\gamma})$ and $\Theta(\log K)$, respectively,  since $K=n^{\delta}$, for some constant $\delta\in(0,1]$.
	In other words we can write
	\begin{align*}
	S_2(\gamma) &=\Theta\left(\frac{\Lambda(\gamma/2, 0)}{\sqrt{M\Lambda(\gamma, 0)}}\right).
	\end{align*}
	Then, we will have
	\begin{equation}\label{s2estim}
	S_2(\gamma) = \left\{
	\begin{array}{llll}
	\Theta\left(\sqrt{K/M}\right) &: &\quad 0\leq \gamma <1, \\
	\Theta\left(\sqrt{K/M\log K}\right) &: &\quad  \gamma=1,\\
	\Theta\left({K^{1-\gamma/2}/\sqrt{M}}\right) &: &\quad  1<\gamma<2,\\
	\Theta\left(\log K/\sqrt{M}\right) &: &\quad  \gamma=2,\\
	O(1) &: &\quad  \gamma>2.
	\end{array}
	\right.	
	\end{equation} 
	Now, considering equalities (\ref{S1estim}) and (\ref{s2estim}) completes the proof.
\end{proof}

%--------------------------
	\begin{lemma}\label{lem:LambdaBounds}
		For $\Lambda(\gamma, k)=\sum_{l=k}^K l^{-\gamma}$, where $\gamma\ge 0$, we have
		\[
		\Lambda(\gamma, k) = \left\{
		\begin{array}{llll}
		\Theta\left(K^{1-\gamma}\right)-\Theta({k}^{1-\gamma}) &: &\quad 0 \leq \gamma <1, \\
		\Theta\left({\log K}\right)-\log k &: &\quad  \gamma=1, \\
		O\left(1\right) &: &\quad \gamma>1.
		\end{array}
		\right.	
		\]
	\end{lemma}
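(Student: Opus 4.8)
The plan is to prove the lemma by the standard integral-comparison (integral test) technique, treating $f(x) = x^{-\gamma}$ as a nonincreasing function on $[1,\infty)$ for every $\gamma \ge 0$ (constant when $\gamma = 0$, strictly decreasing when $\gamma > 0$). Because $f$ is nonincreasing, each term satisfies $\int_i^{i+1} f(x)\,dx \le i^{-\gamma} \le \int_{i-1}^{i} f(x)\,dx$, and summing these over $i = j,\dots,K$ (keeping the endpoint term $f(j)$ separate in the upper bound) yields
\[
\int_j^{K+1} x^{-\gamma}\,dx \;\le\; \Lambda(\gamma, j) \;\le\; j^{-\gamma} + \int_j^{K} x^{-\gamma}\,dx .
\]
This reduces the whole problem to evaluating one elementary integral and checking that the correction $j^{-\gamma} \le 1$ and the endpoint shift between $K$ and $K+1$ are of lower order than the leading term in each regime.

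Next I would split into three cases according to the sign of $1-\gamma$. For $\gamma \ne 1$ the antiderivative gives $\int_j^K x^{-\gamma}\,dx = (K^{1-\gamma}-j^{1-\gamma})/(1-\gamma)$, so both the upper and lower bounds agree up to constants with $(K^{1-\gamma}-j^{1-\gamma})/(1-\gamma)$. When $0 \le \gamma < 1$ we have $1-\gamma \in (0,1]$, the term $K^{1-\gamma}$ grows while $j^{1-\gamma}\ge 1$, and the additive correction $j^{-\gamma}\le 1$ is negligible against $K^{1-\gamma}$; this gives $\Lambda(\gamma,j) = \Theta(K^{1-\gamma}) - \Theta(j^{1-\gamma})$. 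When $\gamma > 1$ we have $1-\gamma < 0$, so $K^{1-\gamma}\to 0$ and $(j^{1-\gamma}-K^{1-\gamma})/(\gamma-1) \le 1/(\gamma-1) = O(1)$ using $j\ge 1$, giving the convergent-tail bound $\Lambda(\gamma,j) = O(1)$.

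For the boundary case $\gamma = 1$ the antiderivative changes form: $\int_j^K x^{-1}\,dx = \log K - \log j$, so the sandwich reads $\log(K+1) - \log j \le \Lambda(1,j) \le 1/j + \log K - \log j$, i.e. $\Lambda(1,j) = \log K - \log j + O(1)$, which matches the stated $\Theta(\log K) - \log j$.

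I do not expect a genuine obstacle, as this is a routine $p$-series estimate; the only points requiring care are (i) making the nonincreasing-function inequality precise, including at the endpoints $i=j$ and $i=K$, and (ii) interpreting the informal expression $\Theta(\cdot)-\Theta(\cdot)$, namely verifying that the correction terms produced by the integral test (the additive $j^{-\gamma}$ and the $K$-versus-$K+1$ endpoint) are of smaller order than the two displayed $\Theta$-terms in each regime, so that they are absorbed into the constants. The separate treatment of $\gamma = 1$, where the integral produces a logarithm rather than a power, is the one place the argument branches, but it is entirely mechanical.
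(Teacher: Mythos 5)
Your proposal is correct and follows essentially the same route as the paper: the paper's proof also sandwiches $\Lambda(\gamma,j)$ via the Riemann/integral comparison $\int_{j}^{K+1} t^{-\gamma}dt \le \Lambda(\gamma,j) \le j^{-\gamma} + \int_{j}^{K} t^{-\gamma}dt$ and then splits into the three cases $0\le\gamma<1$, $\gamma=1$, $\gamma>1$. Your write-up is in fact slightly more explicit than the paper's, which leaves the case analysis to the reader.
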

	%---------
	\begin{proof}
		First notice that by the definition of Riemann integration, we can upper and lower bound $\Lambda(\gamma,k)$ as follows
		\begin{align*}
		\int_{k}^{K+1} t^{-\gamma}dt \le \sum_{l=k}^K l^{-\gamma} \le k^{-\gamma} + \int_{k}^{K} t^{-\gamma}dt.
		\end{align*}
		This can be simplified to
		\begin{align*}
		\frac{1}{1-\gamma}\left[ (K+1)^{1-\gamma} - k^{1-\gamma} \right] \le\Lambda(\gamma,k)
		\end{align*}
		and
		\begin{align*}
		\Lambda(\gamma,k) \le k^{-\gamma}+ \frac{1}{1-\gamma}\left[ K^{1-\gamma} - k^{1-\gamma} \right].
		\end{align*}
		
		Now, considering the following three cases $0\le \gamma < 1$, $\gamma = 1$, and $\gamma > 1$ will conclude the proof.
	\end{proof}

It is interesting to note that the nearest replica strategy also arrives at almost the same communication cost derived in Corollary~\ref{cor:CommCost} (see \cite[Theorem~3]{SiaPourShar_TPDS17}).

\section{Coded Load Balancing in General Networks} \label{sec:CodedLoadBalancingGeneral}
The load balancing behavior of our proposed algorithm in last section, which was proved for grid networks, can be generalized for other networks as well. In this section, we will generalize our results for other networks with some symmetry properties. In order to do this, in this paper, we define \textit{Locally Symmetric Networks} as follows.

\begin{definition}[Locally Symmetric] \label{def:LocallySymmetric}
	We define a class of graphs $\mathcal{G}$ \emph{locally symmetric}, if for every $G\in \mathcal{G}$, for an arbitrary integer $r\ge 1$, and for every $u, v\in V(G)$, we have
		\[
		\max\left[ \frac{|B_{r}(u)|}{|B_{r}(v)|}, \frac{|B_{r}(v)|}{|B_{r}(u)|}\right]=O(1).
		\]
	Then, a graph is called locally symmetric if it belongs to such a class.	
\end{definition}
As will be proved later, \emph{random geometric graphs} and \emph{random regular graphs} are two examples of locally symmetric networks, w.h.p.

To generalize our results, here we need a more general notion of $r_k$ which was introduced in the previous section as the search distance for finding $\ell$ coded chunks of $W_k$. Here, for a given network $G$ and  $k\in [1:K]$, we define $r_k$ as the smallest integer such that for every node $u$ in $G$, we have
\[
\big|B_{r_k}(u)\big|\ge \frac{32 \log n}{\pt_k}=\frac{\alpha_k\ell}{\pt_k},
\]
for some constants $\alpha_k$, $k\in[1:K]$.
%Notice that for each $k\in [1: K]$,  $r_k$ is independent of the node. 

We also need to slightly modify Algorithm~\ref{alg:Coded_Content_Delivery} for locally symmetric class of graphs, as stated in Algorithm~\ref{alg:Coded_Content_Delivery for GP}.

\begin{algorithm}[H]
	\caption{Coded delivery phase for server $i$ in locally symmetric graphs }
	\label{alg:Coded_Content_Delivery for GP}
	\begin{algorithmic}[1]
		\Require $\{ f_{i,j} \}_{j=1}^{R_i}$, $\ell$, $\{r_1,r_2\ldots,r_k\}$
		\For{$j=1:R_i$}
		%\State $X_{f_{i,j}}$= set  of servers that have cached  a coded chunk of $W_{f_{i ,j}}$ at distance at most $r_k$ from $i$ 
		\State $\mc{J} :=$ set of servers that have cached a coded chunk of $W_{f_{i ,j}}$ at distance at most $r_k$ from server $i$
		\State $\mc{I} :=$ set of $\ell$ servers that are selected uniformly at random from $\mc{J}$
%		\State initialize $I=\emptyset$
%		\State with probability $\frac{\ell}{|X_{f_{i,j}}|}$ add server 
%		$s\in X_{f_{i,j}}$ to $I$.
		\For{$s\in \mc{I}$} 
		\State $M_{s \rightarrow i}^{(j)} :=$ a coded chunk of file $W_{f_{i,j}}$ at server $s$
		\State Forward $M_{s \rightarrow i}^{(j)}$ from server $s$ to server $i$ via the shortest path in the network
		\EndFor
		\EndFor
	\end{algorithmic}
\end{algorithm}

\begin{remark}
	Notice that for grid networks, Algorithm~\ref{alg:Coded_Content_Delivery for GP} is reduced to Algorithm~\ref{alg:Coded_Content_Delivery}. This happens because the sets $\mc{I}$ and $\mc{J}$ become the same w.h.p.
\end{remark}

The load balancing behavior of Algorithm~\ref{alg:Coded_Content_Delivery for GP} for locally symmetric networks is characterized in the following theorem.

\begin{thm}[Maximum Load of Locally Symmetric Graphs]\label{thm:maxloadCode_Gen}
	Suppose that 
	$\mathcal{P}=\{p_1,p_2,\cdots,p_K\}$ be the file popularity distribution and let the number of file chunks be $\ell=\Omega(\log n)$. Also, assume that 
	$G=(V, E)$ be a locally symmetric network. Then, the maximum load achieved by Algorithm \ref{alg:Coded_Content_Delivery for GP} is $O(1)$, w.h.p.
\end{thm}

\begin{proof}
	
	For every node $u\in V(G)$, let $\mc{E}_u$ denote the event that for every $k\in[1:K]$, there exist at least $|B_{r_k}(u)|\pt_k/2$ coded chunks of $W_k$ in $B_{r_k}(u)$. Then, we have Lemma \ref{lem:NewEvent_E} stated after the Theorem's proof, whose proof is similar to that of Lemma \ref{lem:Event_E}.

	Similar to the proof of Theorem~\ref{thm:maxloadCode}, by defining $Y_{u, j}$ (taking one  if the $j$-th request, $j\in [1:n]$, has asked a coded chunk from server $u$ and zero otherwise), we can write
	\begin{align*}
		\Pr{Y_{u, j}=1}  \hspace{-30pt} &  \\
		&=\Pr{Y_{u, j}=1| \mathcal{E}}\Pr{\mathcal{E}}\\ 
		&\quad\quad +  \Pr{Y_{u, j}=1| \neg\mathcal{E}}\Pr{\neg\mathcal{E}}\nonumber\\
		& \stackrel{\text{(a)}}{\le} \Pr{Y_{u, j}=1| \mathcal{E}} + o(1/n) \\
		&= \sum_{k=1}^K \Pr{Y_{u, j}=1| \mathcal{E}, W_k \  \mathrm{requested}} p_k + o(1/n)\\
		&\stackrel{(b)}\le \sum_{k=1}^{K} \sum_{w\in B_{r_k}(u)}\frac{\pt_k\cdot p_k}{n}\cdot \frac{\ell}{X_{w,k}} + o(1/n)\\
		& \stackrel{\text{(c)}}{\le}  \sum_{k=1}^{n}\sum_{w\in B_{r_k}(u)}\frac{2p_k\ell}{n|B_{r_k}(w)|} +o(1/n),
	\end{align*}
	where $\mathcal{E}=\cap_{u\in V(G)}\mc{E}_u$, (a) follows from Lemma \ref{lem:NewEvent_E}, in (b) $X_{w,k}$ is defined to be the number of servers in $B_{r_k}(w)$ that have cached a coded chunk of $W_k$,  and (c) follows from the fact that conditioned on $\mathcal{E}$, we have $X_{k,u}\ge |B_{r_k}(u)|\cdot\pt_k/2$. Thus, we have 
	\begin{align*} 
		\Pr{Y_{u, j}=1} &\le  \sum_{k=1}^{n}\sum_{w\in B_{r_k}(u)}\frac{2p_k\ell}{nm} +o(1/n)\\ 
		&=  \sum_{k=1}^{n}|B_{r_k}(u)|\frac{2p_k\ell}{nm} +o(1/n) \\
		&\stackrel{(a)}\le \sum_{k=1}^{n}\frac{2c\ell p_k}{n}+o(1/n)\\
		&\le\frac{2c\ell+1}{n} 
	\end{align*}
	where 
\[
m\triangleq\min_{w\in B_{r_k}(u)} |B_{r_k}(w)|,
\]	
	and in (a) we have used the fact that if $G$ is locally symmetric, then for some constant $c$ we have $|B_{r_k}(u)|\le c m$.

	Now, let $S_u=\sum_{j=1}^nY_{u,j}$ count the number of requests that are responded by server $u$, during allocating $n$ requests. Hence, we have
	\[
	\Ex{S_u}=\sum_{j=1}^n\Ex{Y_{u,j}}\le 2 c\ell+1.
	\]
	Applying a Chernoff bound for $S_u$ implies that 
	\[
	\Pr{S_u\geq (1+\delta) 2 c\ell} \leq \exp(-\delta^2 c \ell)=o(1/n^2),
	\] 
	for appropriate choice of constant $\delta$, and since $\ell=\Omega\left(\log n\right)$. Thus, taking union bound over all servers shows that 
	each server is requested at most $O\big(\ell\big)$ times
	where each request involves sending $F/\ell$ bits. Hence, it has to handle at most $O(1)$ bits. This concludes the proof.
\end{proof}

\begin{lemma}\label{lem:NewEvent_E}
	For the event $\mc{E}_u$ defined above, we have
	$\Pr{\mc{E}_u} =1-o(n^{-2})$. Then, the event $\mathcal{E}=\cap_{u\in V(G)}\mc{E}_u$ happens with probability $\Pr{\mc{E}} =1-o(n^{-1})$.
\end{lemma}

To apply the above result for other network topologies, it is sufficient to show that those networks are locally symmetric. Here, we consider three well-studied families of networks, namely, hypercube, random geometric graphs and random regular graphs.
 
\textbf{Hypercube:}
 A hypercube, also called $n$-cube, is a network where the nodes can be presented by $n$-bit binary words.  Two nodes are connected if and only if  the nodes only differ in one bit. 
 
 \begin{proposition}\label{pro:hyper}
 	For every given $n>0$, $n$-cube is locally symmetric. 
 \end{proposition} 
 \begin{proof}
 	By the definition of the $n$-cube, two nodes are connected if and only if they differ in only one bit.
 	It is not hard to see that if $u$ is adjacent to $v$, then for every $n$-bit $x$, $u\oplus x$ is also adjacent to $v\oplus x$. Note that we use ``$\oplus$'' to denote the XOR of two words. Thus,
 	for each node $u$ and given an integer $r\ge 1$, we have 
 	$|B_{r}(u)|=\sum_{i=0}^r {n\choose i}$,
 	which satisfies the locally symmetric graph constraint defined in Definition~\ref{def:LocallySymmetric}.
\end{proof}

\textbf{Random Geometric Graphs:}
A random geometric graph, denoted by $\text{RG}(n, \lambda)$, is a network with $n$ nodes chosen uniformly at random in the square $[0,\sqrt{n}]\times [0,\sqrt{n}]$. Two nodes are connected if and only if their Euclidean distance is at most $\lambda$. It is well-know (e.g., see \cite{diaz2016}) that if we set $\lambda>\lambda_c=\sqrt{\frac{\log n+O(1)}{\pi}}$, then, the obtained graph is connected, w.h.p.

\begin{proposition}\label{pro:RandomGeometric}
	For a given $n$, a typical $n$-node random geometric network is locally symmetric, w.h.p.
\end{proposition}
\begin{proof}
	To show the proposition, we apply a result  from \cite{diaz2016} showing that if $G$ is a realization of $\text{RG}(n,\lambda)$ with $\lambda=\omega(\sqrt{\log n})$, then for every $u, v\in V(G)$ we have
	
	\[
	\frac{\dist_{\mathsf{E}}(u, v)}{\lambda}\le \dist_G(u,v)\le \frac{\dist_{\mathsf{E}}(u,v)}{\lambda}\big(1+o(1)\big)
	\]
	w.h.p., where $\dist_{\mathsf{E}}(\cdot,\cdot)$ denotes the Euclidean distance between $u$ and $v$.  By applying the above inequalities,  we have $v\in B_r(u)$ if and only if 
	\[
	r\lambda\le \dist_{\mathsf{E}}(u, v)\le r\lambda \big(1+o(1) \big).
	\]  
	This implies that for each $u\in V(G)$,
	\[
	D_{r\lambda}(u)\subseteq B_r(u)\subseteq D_{r\lambda(1+o(1))}(u),
	\]
	where $D_{l}(u)$ is a subset of $V(G)$ whose Euclidean distances 
	from $u$ is at most $l$. 
	Clearly, one may see that the probability that one node falls in $D_l(u)$ is $\frac{\pi l^2}{n}$, where $l^2\pi$ is the area of the disk $D_l(u)$.
	Since nodes are drawn uniformly and independently at random from $[0, \sqrt{n}]\times[0, \sqrt{n}]$, the expected number of nodes falls in 
	$D_l(u)$ is $l^2\pi$. Applying a Chernoff bound, we conclude that the number of nodes in $D_l(u)$ is $\Theta(l^2\pi)$. By setting $l$ to be either $r{\lambda}$ or $r\lambda(1+o(1))$, we get that 
	$|B_r(u)|=\Theta(r^2\lambda^2)$. Therefore, for a typical random geometric graph $G$, $G$ is locally symmetric. 
\end{proof}

\textbf{Random $d$-Regular Graphs:}
Suppose that for every $n$ and $d\ge 3$, $\mc{G}_{n,d}$ is the family of all $d$-regular graphs with $n$ nodes. Assume that  $G$ is a randomly chosen graph from $\mc{G}_{n,d}$. Then, it is well-known  that w.h.p., for every $r=o(\log_d n)$,
and  $u\in V(G)$,   the subgraph induced by set $B_r(u)$ looks like a $d$-ary tree rooted at $r$. Hence, we have 
$|B_{r}(u)|=\Theta\Big( d(d-1)^{r-1} \Big)$,
which holds for every $u\in V(G)$ and $r=o(\log_d n)$ (for more details refer to \cite{lubetzky2010}). Thus, it is clear that $G$ is locally symmetric for $r=o(\log_d n)$ which is sufficient for Theorem~\ref{thm:maxloadCode_Gen} to be valid for random $d$-regular graphs.

%Section-------------------------------------------- 

\section{Numerical Analysis}\label{sec:Simulations}

In this section, we use Monte Carlo simulations to investigate the communication cost and maximum load performance of a content delivery network, under the model described in Section \ref{sec:SystemModel}. Our simulator has been written in Python and can run multiple instances of simulations in parallel \cite{CLB-Simulator-GitHub_2018}.
Here, our main goal is twofold. First, we verify our key understandings from asymptotic theoretical results about the benefits of coding. Second, we investigate other aspects of the proposed scheme not revealed in the asymptotic analyses.

In order to clarify the effect of different parameters on the network performance, we start numerical analysis with a uniform popularity distribution, \ie, $\gamma=0$ for the Zipf distribution, and then investigate the effect of varying $\gamma$ on the results.
In all of the following simulations, each data point is obtained by taking average over 5000 simulation runs.

In Fig.~\ref{fig:SrvSzVar_MaxLoad_Lattice_Uniform_Coded_fn=100_chnmax=1_itr=5000_Fig1} the maximum load $L$ is plotted versus the network size $n$ for the \emph{nearest replica}, \emph{power of two choices}, and \emph{coded load balancing} strategies. Here, the network topology is a grid, the library size is $K=100$ and we assume a uniform popularity profile. 
Note that the maximum load of power of two choices scheme is better than the coded scheme when chunk size is $\ell=4$. However, if chunk size is increased to $\ell=10$, our proposed scheme surpasses all baseline schemes.
Also, Fig.~\ref{fig:SrvSzVar_MaxLoad_Lattice_Uniform_Coded_fn=100_chnmax=1_itr=5000_Fig2} shows similar trend for the cache size $M=10$.

\begin{figure}
\begin{center}
\includegraphics[width=.47\textwidth]{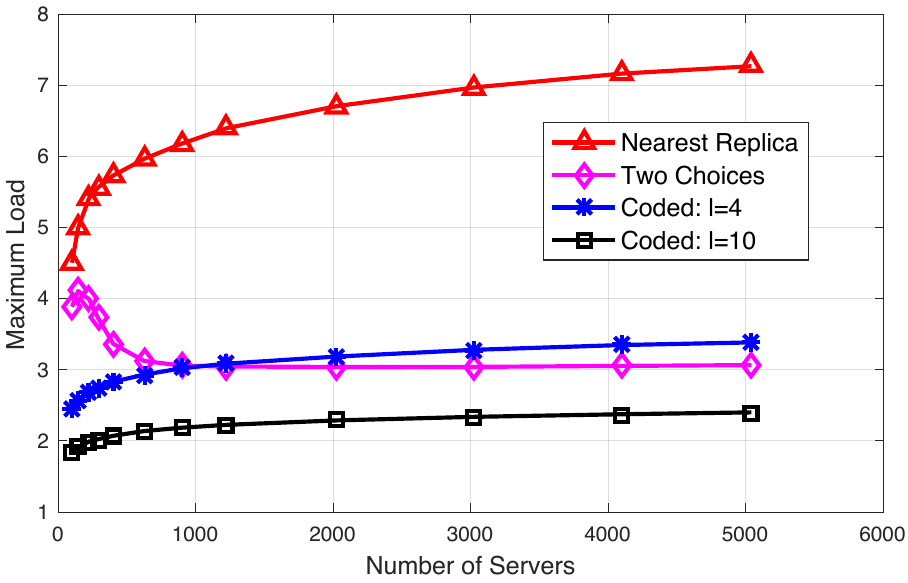}
\end{center}
\caption{Comparing the maximum load $L$ of the nearest replica, power of two choices, and coded load balancing strategies versus the number of servers $n$. The library size is $K=100$ and the cache size is $M=2$.}
\label{fig:SrvSzVar_MaxLoad_Lattice_Uniform_Coded_fn=100_chnmax=1_itr=5000_Fig1}
\end{figure}

\begin{figure}
	\begin{center}
		\includegraphics[width=.47\textwidth]{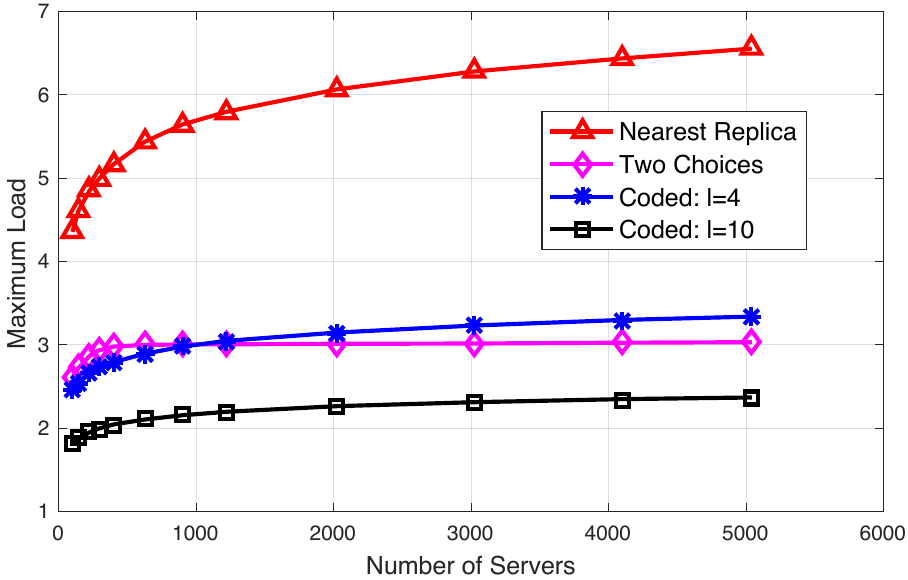}
	\end{center}
	\caption{Comparing the maximum load $L$ of the nearest replica, power of two choices, and coded load balancing strategies versus the number of servers $n$. The library size is $K=100$ and the cache size is $M=10$.}
	\label{fig:SrvSzVar_MaxLoad_Lattice_Uniform_Coded_fn=100_chnmax=1_itr=5000_Fig2}
\end{figure}

In order to investigate the role of chunk size $\ell$ in the performance of proposed scheme, in Fig.~\ref{fig:ChnkSzVar_MaxLoad_Lattice_Uniform_Coded_sn=3025_fn=100_cs=10_chnmax=1_itr=5000}, we have considered a grid network of size $n=1024$, a library size of $K=100$, and  cache sizes $M\in\{1,10\}$.
For comparison, the results of nearest replica and power of two choices strategies are also highlighted in this figure. As it is observed in Fig.~\ref{fig:ChnkSzVar_MaxLoad_Lattice_Uniform_Coded_sn=3025_fn=100_cs=10_chnmax=1_itr=5000}, in order to obtain the coding benefit (\ie, surpassing the power of two choices performance), the chunk size should be above a threshold.

\begin{figure}
	\begin{center}
		\includegraphics[width=.47\textwidth]{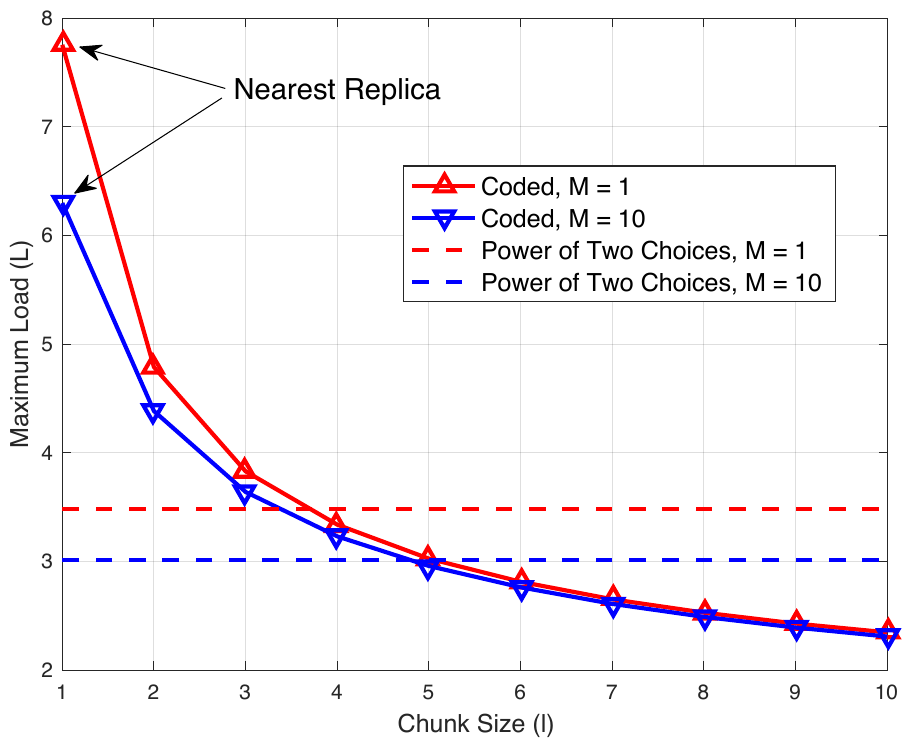}
	\end{center}
	\caption{The maximum load $L$ of the proposed coded load balancing strategy versus the chunk size $\ell$. The library size is $K=100$ and the number of servers is $n=1024$.}
	\label{fig:ChnkSzVar_MaxLoad_Lattice_Uniform_Coded_sn=3025_fn=100_cs=10_chnmax=1_itr=5000}
\end{figure}

In Fig.~\ref{fig:CacheSzVar_MaxLoad_Lattice_Uniform_Coded_sn=1024_fn=100_chnmax=1} the maximum load is depicted versus the cache size $M$ of each server. Here, the network topology is grid, $n=1024$, and $K=100$. It is interesting to note that for large enough cache size, the maximum load does not decrease as cache size increases.
\begin{figure}
	\begin{center}
		\includegraphics[width=.47\textwidth]{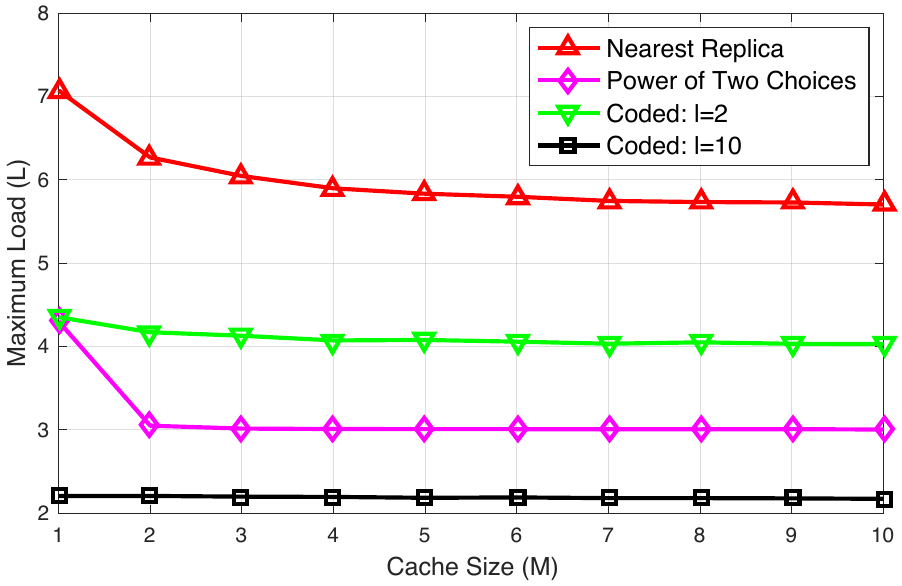}
	\end{center}
	\caption{The maximum load $L$ of proposed coded load balancing strategy versus the cache size $M$. The library size is $K=100$ and the number of servers is $n=1024$.}
	\label{fig:CacheSzVar_MaxLoad_Lattice_Uniform_Coded_sn=1024_fn=100_chnmax=1}
\end{figure}

Thus far, for the purpose of clarity, we have investigated different aspects of the proposed coded strategy in the case of $\gamma=0$. Here, we move forward to investigate the role of Zipf parameter $\gamma$ on the network performance as presented in Fig.~\ref{fig:ZipfGammaVar_MaxLoad_Lattice_Zipf_Coded_sn=1024_fn=100_cs=2_chnmax=1}, in which the maximum load is plotted versus $\gamma$. We observe that the maximum load does not depend on the popularity profile (here characterized by $\gamma$), which is consistent with our finding in Theorem~\ref{thm:maxloadCode}. This is a direct consequence of our \emph{proportional} cache content placement described in  Algorithm~\ref{alg:Coded_Cache_Placement}.
\begin{figure}
		\centering
		\includegraphics[width=.47\textwidth]{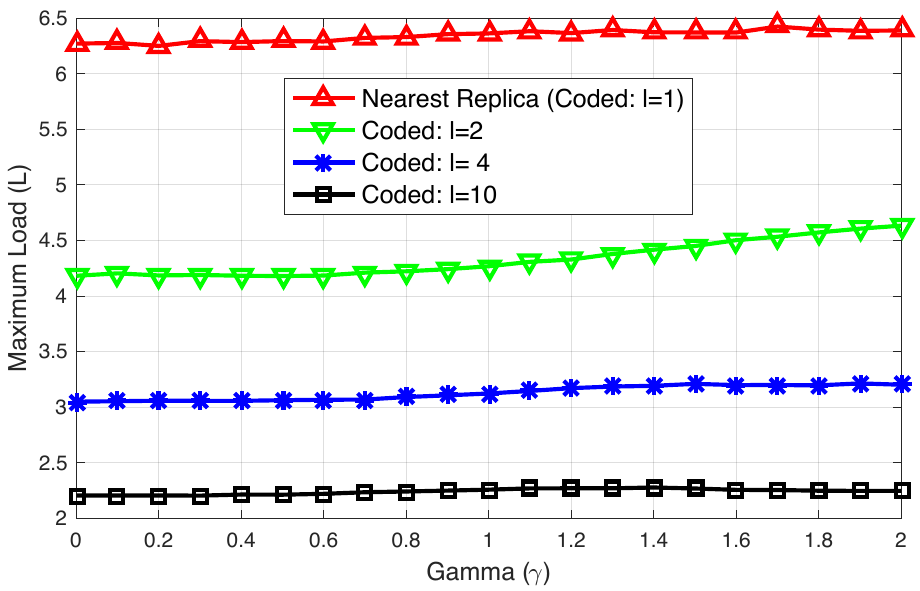}
		\caption{Maximum load of the coded load balancing scheme versus the Zipf parameter $\gamma$ for $n=1024$ and $M=2$. The network topology is a grid and the file popularity is uniform.}
		\label{fig:ZipfGammaVar_MaxLoad_Lattice_Zipf_Coded_sn=1024_fn=100_cs=2_chnmax=1}
\end{figure}

In order to further investigate the validity of our results for other network topologies, in Fig.~\ref{fig:SrvSzVar_Maxload_RGG_Uniform_Coded_fn=100_cs=2_chn=1_chnmax=1} we have plotted the maximum load versus number of servers for random geometric graph (RGG)\footnote{Here, we consider a disk model RGG.}. Comparing this figure with Figure~\ref{fig:SrvSzVar_MaxLoad_Lattice_Uniform_Coded_fn=100_chnmax=1_itr=5000_Fig1} shows that our proposed method performs similarly for grid and random geometric networks.
\begin{figure}
	\centering
	\includegraphics[width=.47\textwidth]{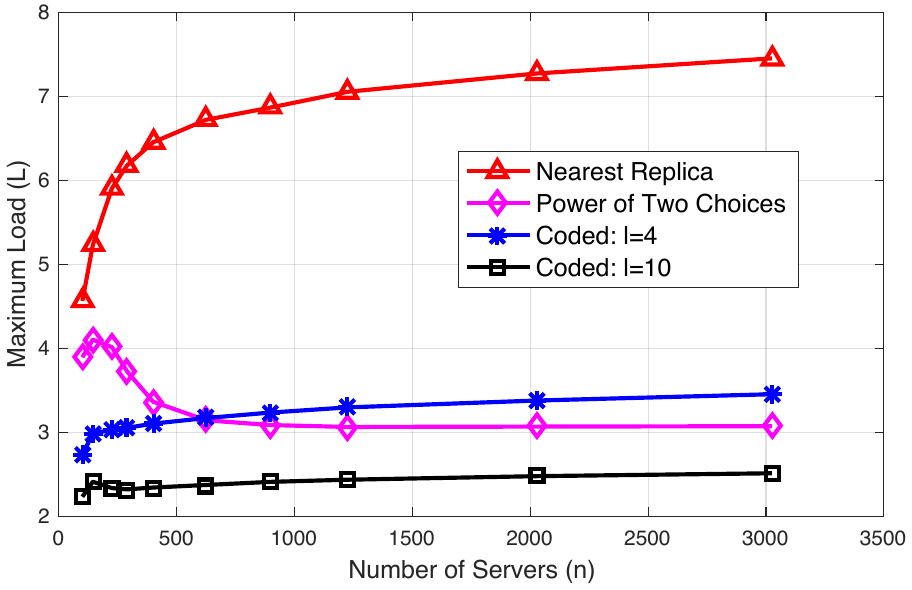}
	\caption{Maximum load of the coded load balancing scheme versus the number of servers $n$ for the random geometric graph (RGG). The library size is $K=100$, cache size is $M=2$, and the popularity profile is uniform.}
	\label{fig:SrvSzVar_Maxload_RGG_Uniform_Coded_fn=100_cs=2_chn=1_chnmax=1}
\end{figure}

%\begin{figure}
%	\centering
%	\includegraphics[width=.47\textwidth]{Maxload_SrvSzVar_BarabasiAlbert_Uniform_Coded_fn=100_cs=2_chn=1_chnmax=1_itr=5000.pdf}
%	\caption{Maximum load of the coded load balancing scheme versus the number of servers $n$ where the network topology is power law random graph. The library size is $K=100$, cache size is $M=2$, and the popularity profile is uniform.}
%	\label{fig:Maxload_SrvSzVar_BarabasiAlbert_Uniform_Coded_fn=100_cs=2_chn=1_chnmax=1}
%\end{figure}

In all above simulation scenarios, we have verified the superiority of our result compared to previous schemes in terms of maximum load. 
Here, we compare the communication cost of the proposed scheme with the nearest replica (uncoded) scheme which has the minimum communication cost among previous schemes\footnote{The power of two choices scheme has always higher communication cost than the nearest replica scheme and hence is not plotted in Fig.~\ref{fig:CacheSzVar_ComCost_Lattice_Uniform_Coded_sn=1024_fn=100_chnmax=1} and Fig.~\ref{fig:ZipfGammaVar_ComCost_Lattice_Zipf_Coded_sn=1024_fn=100_cs=2_chnmax=1}.}.
To this end, Fig.~\ref{fig:CacheSzVar_ComCost_Lattice_Uniform_Coded_sn=1024_fn=100_chnmax=1} shows the communication cost versus cache size $M$ for a grid topology of $n=1024$ nodes, library size $K=100$, and uniform popularity profile. As it shows, the proposed coded scheme results in a slightly lower communication cost compared to the nearest replica strategy.
\begin{figure}
	\begin{center}
		\includegraphics[width=.47\textwidth]{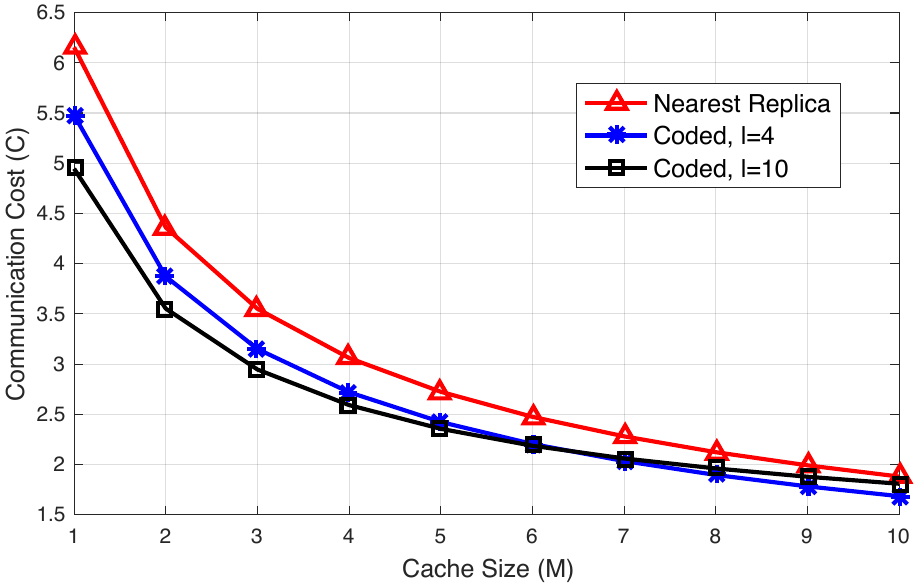}
	\end{center}
	\caption{Communication cost versus cache size $M$. The network topology is a grid of size $n=1024$, $K=100$, and popularity profile is uniform.}
	\label{fig:CacheSzVar_ComCost_Lattice_Uniform_Coded_sn=1024_fn=100_chnmax=1}
\end{figure}

Finally, in Fig.~\ref{fig:ZipfGammaVar_ComCost_Lattice_Zipf_Coded_sn=1024_fn=100_cs=2_chnmax=1} communication cost is plotted versus $\gamma$. As it is observed from this figure, communication cost of all the schemes show a decreasing trend as $\gamma$ increases. It is interesting to note that the coded schemes' communication cost is better in the lower $\gamma$ regime compared to the nearest replica scheme.
\begin{figure}
		\centering
		\includegraphics[width=.47\textwidth]{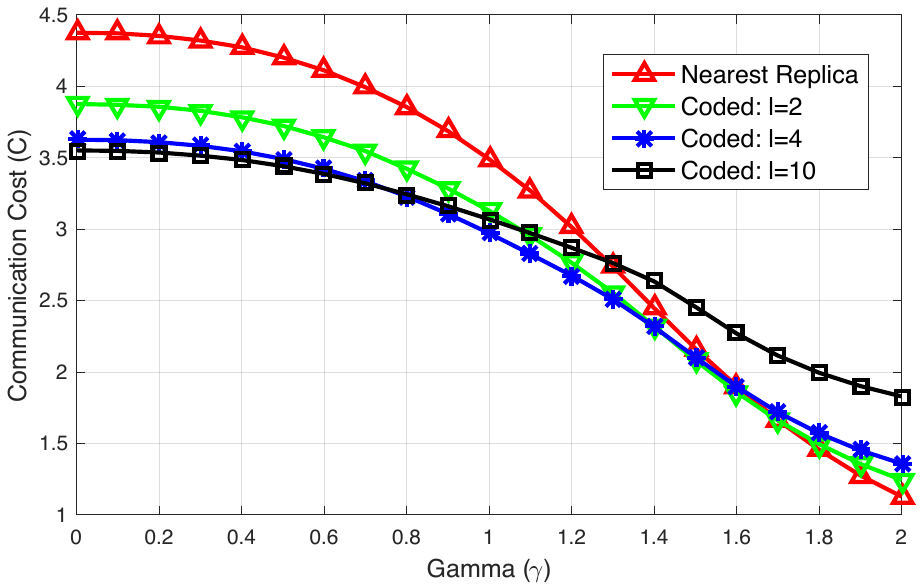}
	    \caption{Communication cost of the coded load balancing scheme versus the Zipf parameter $\gamma$ for the same set of parameters stated in Fig.~\ref{fig:ZipfGammaVar_MaxLoad_Lattice_Zipf_Coded_sn=1024_fn=100_cs=2_chnmax=1}.}
		\label{fig:ZipfGammaVar_ComCost_Lattice_Zipf_Coded_sn=1024_fn=100_cs=2_chnmax=1}
\end{figure}

%Section-------------------------------------------------------------
\section{Discussions and Concluding Remarks}\label{sec:Conclusions}
We have proposed and investigated a coded cache content placement and content delivery scheme which is shown to surpass the \emph{nearest replica strategy} and \emph{power of two choices} baseline schemes in terms of load balancing performance. By deriving closed-form expressions for a grid network, we have shown that the proposed scheme will result in an almost perfect load balancing performance without sacrificing communication cost (Table~\ref{tab:ResultSummary} summarizes our results versus results of the two baseline schemes proposed in \cite{SiaPourShar_TPDS17}).
Furthermore, we have generalized the above result for a more general class of networks, including Hypercube, Random Geometric Graphs, and Random $d$-Regular Graphs.
By performing extensive simulations, we have verified our theoretical findings as well as investigated the non-asymptotic performance of the proposed scheme.

\begin{table*}
\begin{center}
\caption{The scaling results for the proposed coding scheme versus the baseline schemes of \cite{SiaPourShar_TPDS17} (for grid topology with uniform file popularity).} \label{tab:ResultSummary}	

\begin{tabular}{|c|c|c|c|c|}
	\hline
	& $L$ & Regime ($L$) & $C$ & Regime ($C$)\\ 
	\hline
	Nearest replica \cite{SiaPourShar_TPDS17} & $\Theta(\log{n})$ & $K=n^\delta$ for $0<\delta<1$, $M=\Theta(1)$ & $\Theta\left(\sqrt{\frac{K}{M}} \right)$ & $M \ll K$ \\
	\hline
	Power of two choices \cite{SiaPourShar_TPDS17} & $\Theta(\log\log{n})$ & \parbox[c]{5cm}{$K=n$, $M=n^\alpha$ and $r=n^\beta$ where $\alpha+2\beta\ge 1+2\frac{\log\log{n}}{\log{n}}$ for $0<\alpha,\beta<1/2$} & $\Theta(r)$ & \parbox[c]{5cm}{$K=n$, $M=n^\alpha$ and $r=n^\beta$ where $\alpha+2\beta\ge 1+2\frac{\log\log{n}}{\log{n}}$ for $0<\alpha,\beta<1/2$}\\
	\hline
	Coded & $O(1)$ & $\ell=\Omega(\log{n})$ & $\Theta\left(\sqrt{\frac{K}{M}} \right)$ & \parbox[c]{5cm}{$M=\Theta(1)$, $K=n^\delta$, for $0<\delta<1$, and $\ell=\Theta(\log{n})$}\\
	\hline
\end{tabular}
\end{center}
\end{table*}

Finally, here we comment on the complexity of coding/decoding of the proposed scheme. In the proposed coded caching scheme, at the end of the content delivery phase, each server has received $\ell$ coded chunks for each of its requests. Then, in order to recover the requested file, it has to calculate the inverse of an $\ell \times \ell = \log n \times \log n$ matrix over a finite field $\mathbb{F}_q$. Basically, this introduces a complexity of order $O(\ell^{2.37})$ finite field operations \cite{Coppersmith_1990}. This may not be computationally feasible in certain practical scenarios. Thus, one may ask what are other approaches which achieve the same performance with less computational complexity.

An alternative approach is to use \emph{Fountain-like codes}, originally proposed for packet erasure channels \cite{Byers_1998_Fountain_Code}. This coding technique benefits from a non-uniform coding operator (similar to \eqref{eq:CodingOperator} but with non-uniform distribution over the coefficients $\alpha_r$). The main idea behind these codes is that by optimizing over the coding coefficients distribution, one can design the parity check matrix of these codes such that the complexity of the encoding and decoding algorithms will be significantly reduced. More specifically, in the cache content placement phase in Algorithm \ref{alg:Coded_Cache_Placement}, one can use the encoder of a Raptor code \cite{Shokrollahi06_Raptor_Code}, instead of a uniform operator $\mc{L}$, defined in \eqref{eq:CodingOperator}. It is shown that the Raptor codes have linear encoding and decoding complexity in the codewords length \cite{Shokrollahi06_Raptor_Code}. Translating to our problem setting, this leads to the encoding and decoding complexity of order $\ell=\log(n)$ finite field operations, compared to the aforementioned matrix inversion.

% Bibliography-------------------------------------------------------
\bibliographystyle{IEEEtran}
\bibliography{AFMP}
%\input{bib.tex}
%\bibliographystyle{IEEEtranS}
%\bibliography{diss1}

\end{document}